\def\maxwidth{\ifdim\Gin@nat@width>\linewidth\linewidth\else\Gin@nat@width\fi}
\def\maxheight{\ifdim\Gin@nat@height>\textheight\textheight\else\Gin@nat@height\fi}
\providecommand{\tightlist}{%
  \setlength{\itemsep}{0pt}\setlength{\parskip}{0pt}}
\let\oldparagraph\paragraph
\renewcommand{\paragraph}[1]{\oldparagraph{#1}\mbox{}}
\let\oldsubparagraph\subparagraph
\renewcommand{\subparagraph}[1]{\oldsubparagraph{#1}\mbox{}}
\let\rmarkdownfootnote\footnote%
\def\footnote{\protect\rmarkdownfootnote}
  \title{Using an online sample\\
to estimate the size of an offline population\footnote{For helpful feedback on an earlier version of the manuscript, the authors would like to thank Emilio Zagheni and the participants in the pre-PAA 2018 workshop ``Demographic Research in the Digital Age,'' sponsored by IUSSP; and participants in the 2018 Berkeley Demography Seminar Series. We also thank Raphael Nishimura, who helped us find the comparison estimate from Brazil, and two anonymous reviewers.}}
    \author{\textnormal{Dennis M. Feehan}\footnote{Department of Demography, UC Berkeley, feehan@berkeley.edu}\\
\textnormal{Curtiss Cobb}\footnote{Facebook, Inc.}}
    \date{\textnormal{26 June 2019}}
\begin{document}
\maketitle
\begin{abstract}
Online data sources offer tremendous promise to demography and other social sciences, but researchers worry that the group of people who are represented in online datasets can be different from the general population. We show that by sampling and anonymously interviewing people who are online, researchers can learn about both people who are online and people who are offline. Our approach is based on the insight that people everywhere are connected through in-person social networks, such as kin, friendship, and contact networks. We illustrate how this insight can be used to derive an estimator for tracking the \emph{digital divide} in access to the internet, an increasingly important dimension of population inequality in the modern world. We conducted a large-scale empirical test of our approach, using an online sample to estimate internet adoption in five countries (\(n \approx 15,000\)). Our test embedded a randomized experiment whose results can help design future studies. Our approach could be adapted to many other settings, offering one way to overcome some of the major challenges facing demographers in the information age.
\end{abstract}

\theoremstyle{plain}
\newtheorem{Theorem}{Theorem}[section]
\newtheorem{Result}{Result}[section]
\newtheorem{Corollary}{Corollary}[section]
\theoremstyle{definition}
\newtheorem{Postulate}{Postulate}[section]
\newtheorem{Fact}{Fact}[section]
\newtheorem{Definition}{Definition}[section]
\newtheorem{Conjecture}{Conjecture}[section]
\newtheorem{Problem}{Problem}[section]
\newtheorem{Example}{Example}[section]
\theoremstyle{remark}
\newtheorem{Case}{Case}[section]

\newpage

\hypertarget{introduction}{%
\section{Introduction}\label{introduction}}

Online data sources offer tremendous promise to demography and other social
sciences (Cesare et al. 2018; Lazer et al. 2009; Zagheni and Weber 2012),
but researchers often worry that the group of people who are represented in
online datasets can be different from the general population.
In this study, we develop a strategy for addressing this challenge: we show that by
sampling and anonymously interviewing people who are online, researchers
can learn about both people who are online and people who are offline.

Asking survey respondents to report about others is an idea that has
independently arisen in many different substantive areas
(see, for example, Sirken 1970; Bernard et al. 1991; Hill and Trussell 1977; Marsden 2005).
In demography, the approach can be traced back to Brass and colleagues'
innovative development of census and survey questions that ask respondents
about their parents, spouses, or siblings (Brass 1975).
Our approach can be seen as an extension of this previous work to the situation
where the goal is to learn about everyone in a population, but respondents are
only sampled and interviewed online.
Thus, our study is an illustration of one way to overcome many of
the challenges that face the sampling and survey research community in the
information age.

We illustrate our methodology by developing a new way to study the \emph{digital divide}
in access to the internet around the world.
Scholars use the term digital divide to refer to the fact that
access to the internet is highly unequal:
billions of people around the world
have never been online (Hjort and Poulsen 2019; World Bank 2016);
people in poor countries use the internet much less than people in wealthy
countries (World Bank 2016);
and even within countries that enjoy high levels of internet adoption, research
suggests that access to the internet can differ considerably by age, gender,
income, and race
(Friemel 2016; Haight et al. 2014; Van Deursen and Van Dijk 2014; Vigdor et al. 2014).
Thus, the digital divide is an important dimension of
population inequality in the modern world.

The digital divide is important because research has revealed that access to the
internet may affect health and wellbeing through a wide range of different
mechanisms.
For example, scholars have found that increasing
internet adoption may lead to job creation (Hjort and Poulsen 2019),
improvements in education (Kho et al. 2018),
increases in international trade (Clarke and Wallsten 2006),
increases in social capital (Bauernschuster et al. 2014),
political mobilization (Manacorda and Tesei 2016),
reduced sleep (Billari et al. 2018),
and changes in fertility (Billari et al. 2019).
The World Bank devoted its 2016 World Development Report to the
`digital dividends' that may result from increasing access
to the internet in the developing world (World Bank 2016).

Reliable estimates of internet adoption are typically based on methodologically
rigorous household surveys or censuses (e.g., ICF 2004; Cohen and Adams 2011).
However, this rigor comes at a price: these surveys can be very costly and
typically take months to design and implement (e.g., Parsons et al. 2014; Greenwell and Salentine 2018; ICF 2018; Rojas 2015).
These limitations are especially problematic because internet
adoption appears to be changing on a much faster time-scale than many
conventional indicators of social and economic wellbeing
(Perrin and Duggan 2015; World Bank 2016).

The difficulty of obtaining up-to-date estimates of internet adoption is
unfortunate because researchers need to be able to measure the digital divide in
order to understand its implications for inequality and opportunity;
and policymakers who want to implement and evaluate strategies for making
internet access more widely available rely on being able to measure the level
and rate of change in the number of people who have access to the internet\footnote{For example, the proportion of people using the internet in each country is one of the
  key indicators for the United Nations Sustainable Development Goals; see \href{https://unstats.un.org/sdgs/metadata/}{SDG indicator} 17.8.1.}.

To help address this challenge, we use our methodology to develop an alternative approach
to estimating internet adoption that is dramatically faster and cheaper than
conventional surveys: we interviewed a sample of Facebook users and asked them
whether or not members of their offline personal networks use the internet.
Our approach is based on the insight that internet users
are connected to many other people through in-person social networks such as
kin, friendship, and contact networks. By interviewing a sample of
Facebook users and anonymously asking about the members of these offline social networks,
we can learn about both people who are online and people who are
not online.

\hypertarget{sec:setup}{%
\section{Methods}\label{sec:setup}}

People everywhere are connected to one another through kinship, friendship,
professional activities and interpersonal interactions.
Our strategy for obtaining fast and inexpensive estimates of internet adoption
is based on asking people sampled online to report about internet adoption among other
people they are connected to in these everyday, offline personal networks.
The challenge is to determine how to turn people's anonymous reports about their
personal network members into estimates of internet adoption.
We now explain how we used a formal framework called network reporting
to understand which quantities we need to estimate in order to accomplish our goal
(Feehan 2015; Feehan and Salganik 2016a).
(A detailed derivation can be found in Appendix \ref{sec:estimator}.)

\begin{figure*}
    \centering
        \subfloat[\label{fig:nr01}]{\includegraphics[width=1.5in]{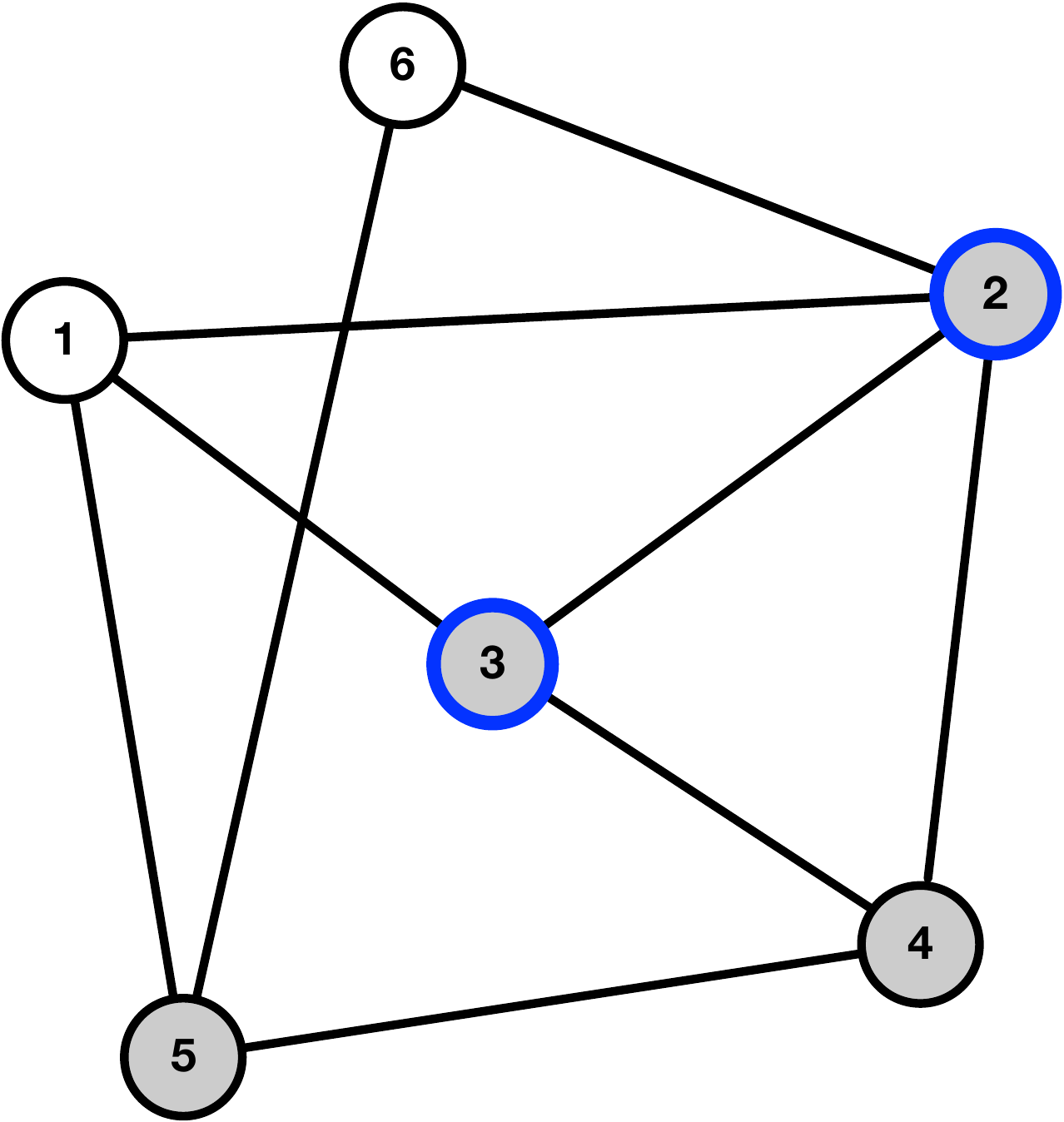}}
        \hspace{5mm}%
        \subfloat[\label{fig:nr02}]{\includegraphics[width=1.5in]{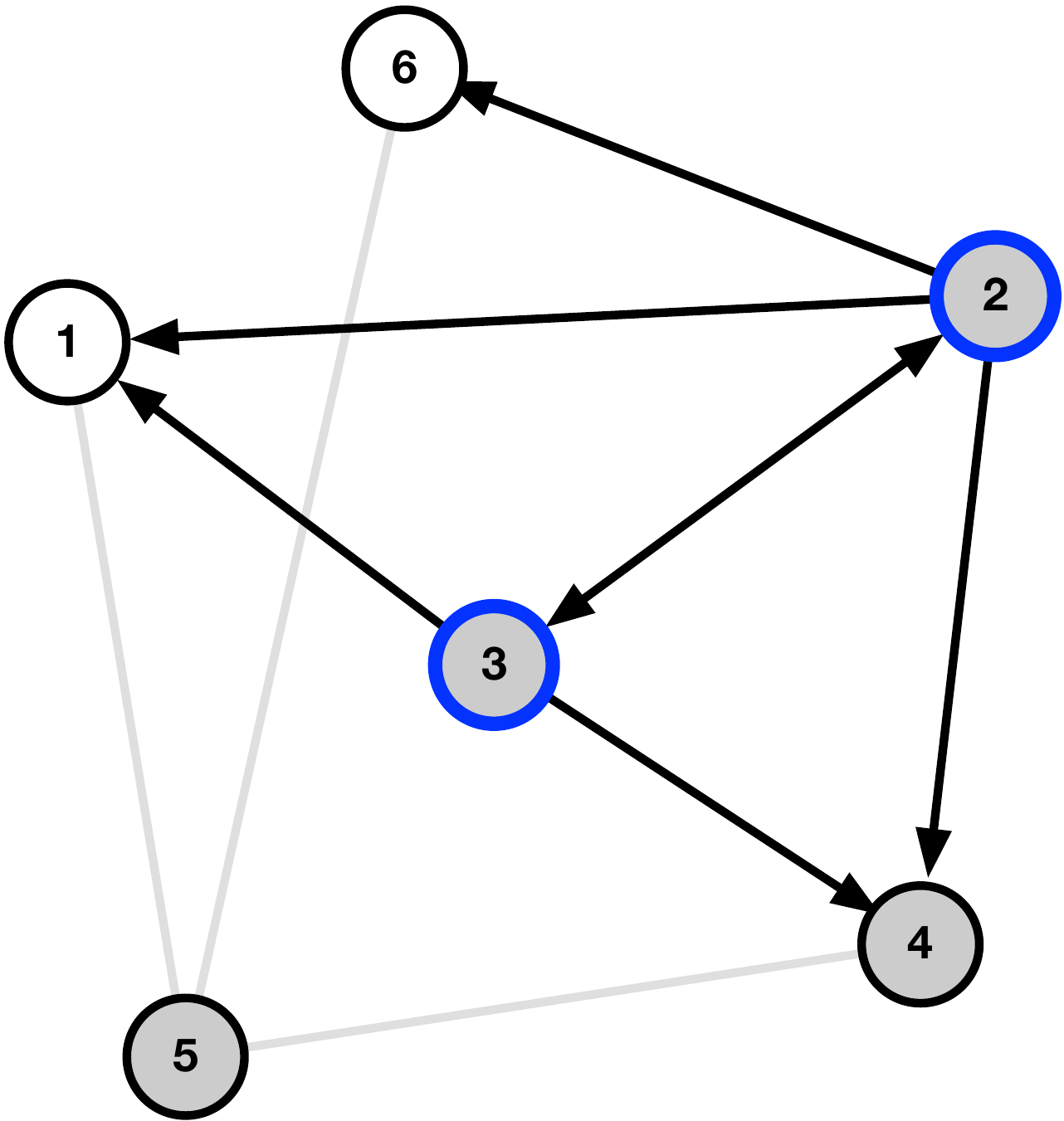}}
        \hspace{5mm}%
        \subfloat[\label{fig:nr03}]{\includegraphics[width=1in]{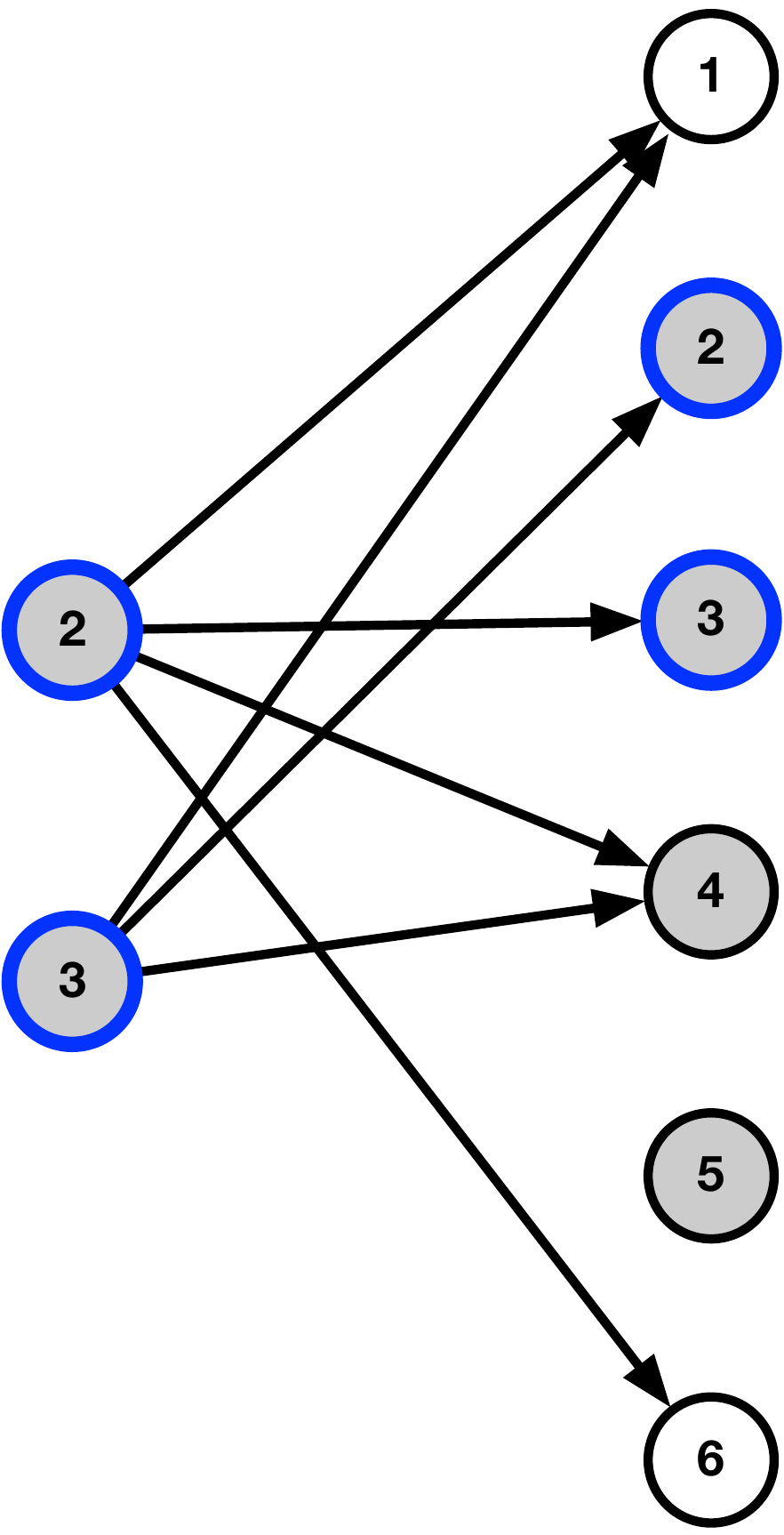}}
        \caption{Network reporting setup: asking people on Facebook to report about their offline personal networks.}
        \label{fig:nr-illustration}
\end{figure*}

Fig. \ref{fig:nr-illustration} illustrates the general setup with an example.
Fig. \ref{fig:nr01} shows six people who are connected together in a social network.
The network relation is symmetric, meaning that whenever person A is connected to person B,
then B is also connected to A.
We make a distinction between nodes that can potentially be sampled and
interviewed---the \emph{frame population}---and other nodes.
For example, a frame population might be cell phone users;
the users of a specific app such as Facebook; or
people who live at addresses that can be reached by postal mail.
In Fig. \ref{fig:nr-illustration}, nodes 2 and 3 are in the frame population.

Fig. \ref{fig:nr02} shows the \emph{reporting network} that is generated when both nodes 2 and 3
are interviewed about the people they are connected to in the social network.
The reporting network is different from the social network:
the social network has an undirected edge \(A-B\) when A and B are
socially connected;
the reporting network, on the other hand, has a directed edge
\(A \rightarrow B\) whenever A reports about B.
When reporting is accurate, there will be structural similarities between the
social network and the reporting network, but this need not be true in general.
The reporting network is a useful formalism that can help researchers
develop estimators, understand possible sources of reporting errors,
and derive self-consistency checks.

Fig. \ref{fig:nr03} shows a rearrangement of Fig. \ref{fig:nr02} that is helpful for deriving
estimators from a reporting network.
On the left-hand side of Fig. \ref{fig:nr03} is the set of nodes that makes
reports (the frame population), and on the right hand
side is the set of nodes that can be reported about (the universe)\footnote{Note that a particular node can appear in both sides if it is in the frame population and in the universe.}.
Drawn this way, every report must connect a node on the left-hand side to a node on
the right-hand side.
Thus, the total number of reports that leaves the left-hand
side must equal the total number of reports that arrives at the right-hand side.
Mathematically, this means that when everyone in the frame population is interviewed,
we have the following identity:

\begin{equation}
\text{\# internet users} = N_H =
 \frac{
    \overbrace{y^{+}_{F,H}}^{\mathclap{\substack{\text{\# reported connections from} \\ \text{people on FB leading to internet users}}}}
 }{
    \underbrace{\bar{v}_{H, F}}_{\mathclap{\substack{\text{average number of times each} \\ \text{internet user gets reported}}}} 
 } 
\label{eq:nrid-text}
\end{equation}

The denominator of Eq. \ref{eq:nrid-text} is a quantity called the \emph{visibility} of
internet users.
The visibility is the number of times the average internet
user would get reported in a census of the frame population.
Intuitively, Eq.~\ref{eq:nrid-text} divides by the visibility to adjust for the fact
that the average internet user would be reported multiple times in a
census of the frame population.

\hypertarget{instrument-design}{%
\subsubsection*{Instrument design}\label{instrument-design}}
\addcontentsline{toc}{subsubsection}{Instrument design}

In principle, people can be asked to report about
any type of personal network relationship that is symmetric.
Thus, the specific type of personal network that respondents are asked to report
about---the \emph{tie definition}---is a study design parameter that researchers are free to vary
(Feehan et al. 2016). To explore the impact of this study design parameter,
we embedded a randomized experiment in our surey.
In our experiment, survey respondents were randomly assigned to report about one of two different tie
definitions: the meal tie definition and the conversational contact tie
definition (Tbl.~\ref{tbl:tie-definitions}).
We chose these two tie definitions because
(1) previous research led us to believe that respondents can plausibly report
the number of people that they interacted with in the previous day, avoiding
the need to indirectly estimate personal network sizes; (2) researchers have
had success using versions of these tie definitions in previous studies
(Feehan et al. 2016; Mossong et al. 2008).

\begin{table}[t]

\caption{\label{tab:unnamed-chunk-2}The two different networks survey respondents were asked about. In our survey experiment, respondents were randomly assigned to report about one of these two networks. \label{tbl:tie-definitions}}
\centering
\begin{tabular}{>{\raggedright\arraybackslash}p{16em}|>{\raggedright\arraybackslash}p{16em}}
\hline
Meal network & Conversational contact network\\
\hline
How many people did you share food or drink with yesterday? These people could be family members, neighbors, or other people. Please include all food or drink taken at any location, including at home, at work, at a cafe, or in a restaurant. & How many people did you have conversational contact with yesterday? By conversational contact, we mean anyone you spoke with face to face for at least three words.\\
\hline
\end{tabular}
\end{table}

Each survey interview took place in two phases:
in the first phase, survey respondents were asked to report the size of their
personal networks
(e.g., ``How many people did you share food or drink with yesterday?''; Tbl.~\ref{tbl:tie-definitions}).
In the second phase, the goal was to obtain information about internet use among
the members of each respondent's personal network.
Ideally, the respondent would provide information about every single person in her network
one by one.
However, this approach seemed likely to produce unacceptable levels of
respondent fatigue (Eckman et al. 2014; Tourangeau et al. 2015).
Therefore, in the second phase of the interview respondents were asked for information about
the three members of their personal networks who `came to mind' first (Fig. \ref{fig:alters}).
We call these people that we obtain additional information about \emph{detailed alters}\footnote{We did not ask for any sensitive or personally identifying information about
  these three detailed alters.}.
(Additional details and our survey instrument are included in Appendix \ref{sec:survey-methods}.

\hypertarget{estimators}{%
\subsubsection*{Estimators}\label{estimators}}
\addcontentsline{toc}{subsubsection}{Estimators}

The identity in Eq. \ref{eq:nrid-text} would hold if we obtained a census of
monthly active Facebook users.
In practice, we have a sample and not a census; therefore,
we construct an estimator for the number of internet users by developing sample-based
estimators for the numerator and the denominator of Eq. \ref{eq:nrid-text}.
We now describe these two components in more detail.

Given information about respondents' network sizes and the detailed alters' internet
use, the numerator of \ref{eq:nrid-text} (\(y_{F,H}\)) can be estimated from our sample with:

\begin{equation}
\widehat{y}_{F,H} = 
\sum_{i \in s} w_i \frac{d_i}{r_i} o_i,
\label{eq:estimator-numerator}
\end{equation}

\noindent where \(s\) is the sample of Facebook users;
\(w_i\) is the expansion weight for \(i \in s\);
\(d_i\) is the network size (degree) of \(i \in s\);
\(r_i\) is the number of detailed alters from \(i \in s\) (\(r_i \in \{1, 2, 3\}\));
and \(o_i\) is the number of detailed alters reported to be online.

We calculate \(w_i\) by approximating our design as a as a simple random sample,
post-stratified by age and gender.
(Appendix \ref{sec:survey-methods} has more information on our weighting.)
In order to use information about the \(r_i\) detailed alters to make inferences about the \(d_i\)
people in the respondent's network, the estimator in Eq. \ref{eq:estimator-numerator}
makes the additional assumption that the detailed alters are a simple random sample
of respondents' personal networks.
Thus, \(\frac{d_i}{r_i}\) can be seen as a weight that accounts for sampling
\(r_i\) out of the \(d_i\) personal network members.
Previous work on egocentric survey research suggests that, instead of being
a simple random sample, network members who
come to mind first may be more likely to come from the same social context,
and may be more likely to be strongly connected to the respondent (Marsden 2005).
Therefore, we develop two different ways to assess this assumption:
first, we introduce internal consistency checks that can detect systematic
biases that would emerge if detailed alters are very different from other personal network
members (sec.~\ref{sec:ic-checks});
and, second, we introduce a sensitivity framework which enables us to formally assess the
impact that different magnitudes of selection bias among the detailed alters would
have on our estimates (Appendix \ref{sec:sensitivity-analysis}).

\begin{figure*}
        \includegraphics{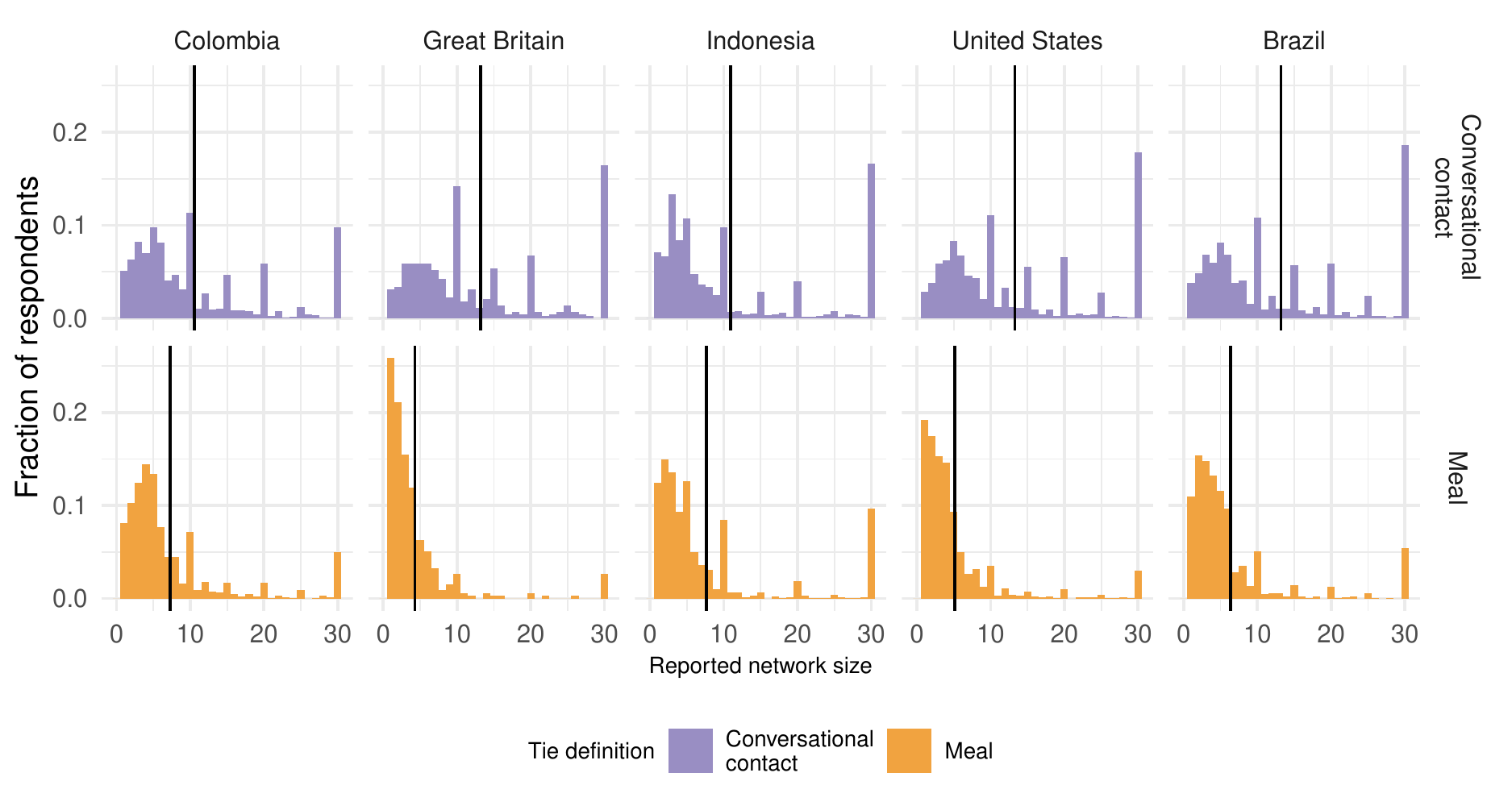}
    \caption{
    Estimated degree distributions for the conversational contact network (left)
and the meal network (right). The vertical line on each panel shows the average.
Average personal network size is smaller for the
meal network than for the contact network; further, the contact network shows
greater evidence of heaping on multiples of 5 and 10 than the meal network. 
These findings are consistent with a hypothesized tradeoff between the quality
and the quantity of information reported in personal
networks. Responses higher than 30 are coded as 30 in these plots.
    }
    \label{fig:estimated-degrees}
\end{figure*}

The denominator of Eq. \ref{eq:nrid-text} (\(\bar{v}_{H,F}\)) is a quantity called
the \emph{visibility} of
internet users, which is defined as the number of times that the average
internet user would be reported in a census of active Facebook users.
Many different strategies could be used to estimate or approximate the
visibility of internet users;
here, we adopt a simple approach: we use the average number of times that
a Facebook user shares a meal with another Facebook user
to approximate the visibility of internet users.
Mathematically, this assumption can be written

\begin{equation}
\bar{d}_{H, F} = \bar{d}_{F,F}.
\label{eq:equal-tie-fbnofb}
\end{equation}

\noindent The condition in Eq. \ref{eq:equal-tie-fbnofb} requires that
two quantities be equal: (1) the rate at which someone who is on the internet
shares a meal with someone who is on Facebook (\(\bar{d}_{H,F}\));
and, (2) the rate at which someone who is on Facebook shares a meal with
someone who is also on Facebook (\(\bar{d}_{F,F}\)).
This assumption would hold if, for example, people who are on the internet
do not pay attention to whether or not another internet user is on
Facebook when deciding to share a meal together.
This assumption could be violated if, for example, people frequently
organize sharing a meal together using Facebook without inviting other
people.
We explore how violating this condition affects estimates as part of
a sensitivity analysis in Appendix \ref{sec:sensitivity-analysis};
in Appendix \ref{sec:simple-model} we develop a simple model that motivates
this condition;
and in Sec.~\ref{sec:conclusion}, we discuss how additional data collection could
remove the need for this condition altogether.

Given the condition in Eq. \ref{eq:equal-tie-fbnofb},
we can estimate \(\bar{v}_{H,F}\) with an estimator for \(\bar{d}_{F,F}\),
the average number of meals that someone on Facebook reports sharing with
someone else on Facebook.
In order to estimate \(\bar{d}_{F,F}\), we use

\begin{equation}
\widehat{\bar{d}}_{F,F} = \frac{\sum_{i \in s} w_i \frac{d_i}{r_i} f_i}{\sum_{i \in s} w_i},
\label{eq:degree-estimator}
\end{equation}

\noindent where the new quantity, \(f_i\), is the number of Facebook users that
respondent \(i\) reports among her detailed alters.

Putting Eq. \ref{eq:estimator-numerator} and Eq. \ref{eq:degree-estimator} together, we have

\begin{equation}
\begin{aligned}
\widehat{N}_H &= \frac{\widehat{y}_{F,H}}{\widehat{\bar{d}}_{F,F}}
= 
\frac{\sum_{i \in s} w_i \frac{d_i}{r_i} o_i}
     {\sum_{i \in s} w_i \frac{d_i}{r_i} f_i}\times
\sum_{i \in s} w_i.
\label{eqn:estimator-any-weights}
\end{aligned}
\end{equation}

Appendix \ref{sec:estimator} has a detailed derivation of the estimator and
a precise description of all of the conditions it relies upon;
Appendix \ref{sec:generalized} describes an alternate approach to producing
estimates using data we collected; and
Appendix \ref{sec:sensitivity-analysis} has a framework for sensitivity analysis
which can be used to understand how estimates are affected by violations of these
conditions.

\hypertarget{sec:results}{%
\section{Results}\label{sec:results}}

We used Facebook's survey infrastructure to obtain a simple random sample of
people who actively use Facebook in five countries around the world:
Brazil (n=3,761),
Colombia (n=4,157),
Great Britain (n=781),
Indonesia (n=2,794), and
the United States (n=4,288)\footnote{We consider users to be active if they have logged onto Facebook in the 30 days before the survey; we also restrict responses to people over 15 years old.}.
We chose these countries because they span a breadth of expected levels of internet adoption
and economic development.
Respondents were slightly more female than male in all countries except for Indonesia,
and age distributions are typical of monthly active Facebook users in these countries.
Figure \ref{fig:respondents} shows the age and gender distribution of survey respondents for
each tie definition\footnote{In order to ensure that the survey instrument and methods worked well, we started with
  a smaller sample in Great Britain (which is why there are fewer respondents in that country).}.
All estimates below are weighted to account for the sample design and to be representative
of the universe of monthly active Facebook users in each country.
Estimates of sampling uncertainty are based on the rescaled bootstrap method
(Feehan and Salganik 2016b; Rao and Wu 1988; Rao et al. 1992).

\begin{figure*}[p]
        \includegraphics[]{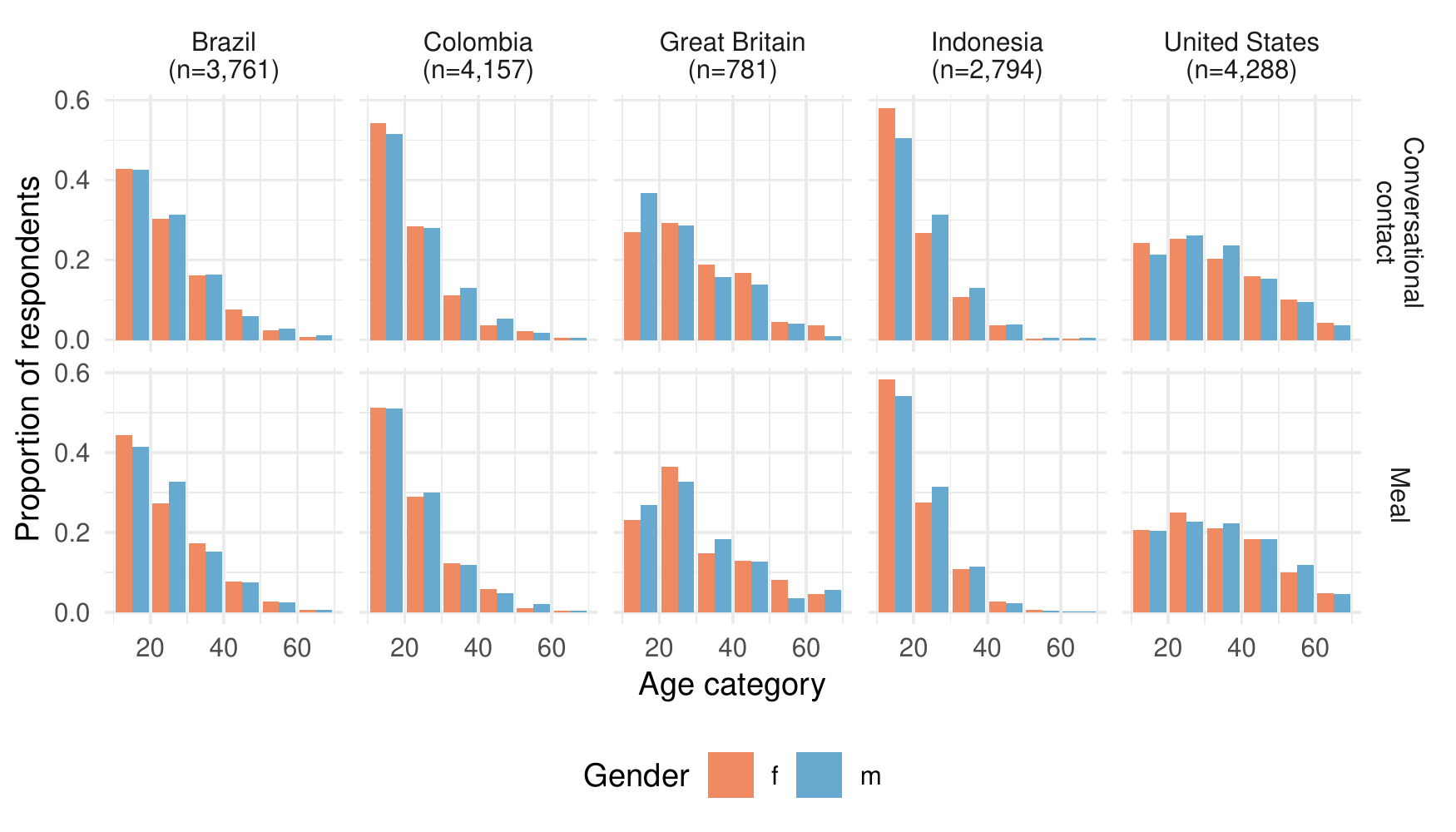}
        \caption{Age and gender of survey respondents in each country. Estimates throughout this article
             use sampling weights to account for sample design and nonresponse.}
    \label{fig:respondents}
\end{figure*}

Figure \ref{fig:estimated-degrees} shows the distribution of personal network sizes
reported by respondents from each country, and for each tie definition\footnote{Recall that respondents were randomly assigned to report either about meal networks or
  about conversational contact networks; thus, Fig. \ref{fig:estimated-degrees} and all
  subsequent figures show results broken down by tie definition.}.
The average size of meal networks was smaller than conversational contact networks
in all countries (Table \ref{tbl:raw_degree}): the average reported size of the
meal network varied from about 4 (Great Britain) to about 8 (Indonesia), while
the average reported size of the conversational contact network varied from
about 11 (Colombia and Indonesia) to about 13 (Brazil, Great Britain,
and the United States).
For both networks, Fig. \ref{fig:estimated-degrees} suggests that there may be heaping in
reported network sizes that are multiples of five and ten; this heaping is more
evident in the reported number of conversational contacts than for meals,
suggesting that reports about the meal network may more accurate.

\hypertarget{sec:ic-checks}{%
\subsection{Internal consistency checks}\label{sec:ic-checks}}

\begin{figure*}[p]
        \includegraphics{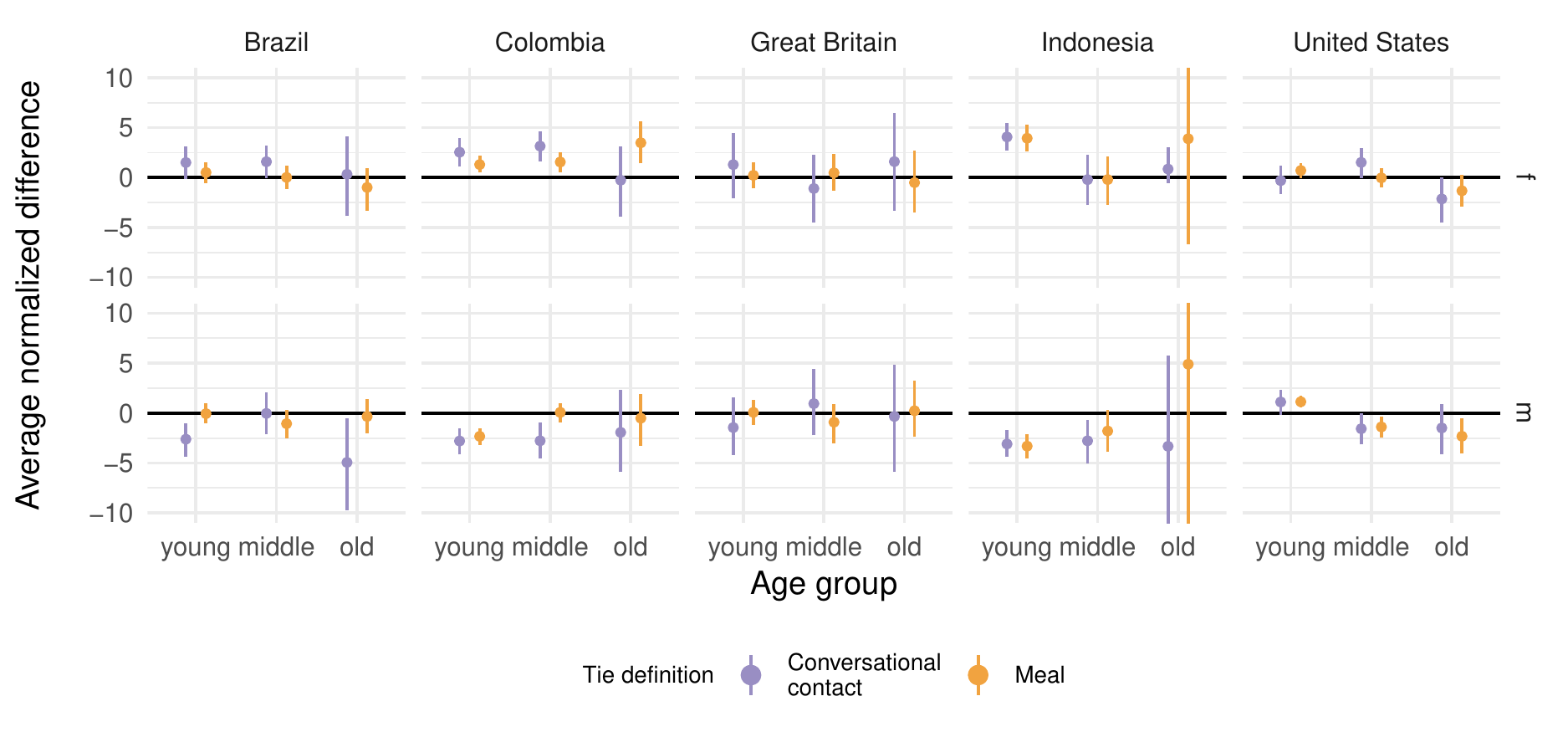}
    \caption{
    Internal consistency checks. 
By estimating the same quantity using independent parts of our sample, we can assess the internal consistency of respondents' network reports.
Estimated difference between two independent estimates of the same quantity and 
95\% confidence intervals are shown for each age-gender group and each type of network; 
an estimate of 0 means that the two independent estimates are exactly the same. 
Across most age-sex groups, results are internally consistent with one another, within sampling error; however, some groups show evidence of reporting errors (e.g. young people in Indonesia). 
Results also suggest that reports about the meal definition are more internally consistent, 
even though meal networks are smaller than conversational contact networks.
    }
    \label{fig:ic}
\end{figure*}

In order to more formally assess the accuracy of reports about each network,
we develop \emph{internal consistency checks} (Bernard et al. 2010; Brewer et al. 2000; Feehan et al. 2016).
These internal consistency checks use the information about the age group and gender
of each detailed alter that respondents reported about.
The idea is to find reported quantities that can be estimated from the
data in two different ways.
To the extent that these independent estimates of the same quantity agree,
the reported network connections are internally consistent.
For example, using survey responses from only men, we can estimate the number
of connections between men and women; similarly, using survey responses from
only women, we can estimate the number of connections between women and men.
By definition, these two quantities are equal;
thus, under perfect conditions where our survey
does not suffer from any reporting errors or selection biases, we would expect
these two independent estimates to agree (up to sampling noise).

We devised internal consistency checks based on reported connections to
and from each of twelve different age-sex groups, by country and by tie definition.
For each age-sex group \(\alpha\), we estimate the average number of connections from
Facebook users in age-sex group \(\alpha\) to Facebook users not in \(\alpha\) (\(d_{F_\alpha, F_-\alpha}\)).
We also estimate the average number of connections from
Facebook users not in age-sex group \(\alpha\) to Facebook users who are in
age-sex group \(\alpha\) (\(d_{F_-\alpha, F_\alpha}\)).
We then define the average normalized difference \(\Delta_\alpha\) to be

\begin{equation}
\Delta_{\alpha} =
K \left( \widehat{d}_{F_{-\alpha}, F_{\alpha}} - \widehat{d}_{F_{\alpha}, F_{-\alpha}} \right),
\label{eq:ic-delta-defn}
\end{equation}

\noindent where \(K\) is a scaling factor that is intended to make it easier to compare
different countries and age-sex groups (Appendix \ref{sec:internal-consistency}).
In the absence of any reporting error, selection biases, or sampling variation,
we would expect \(\Delta_\alpha = 0\).
On the other hand,
if there is homophilic selection bias in the respondents' choice of detailed alters
or if members of group \(\alpha\) are especially conspicuous then we expect \(\Delta_\alpha > 0\);
similarly, if there is heterophilic selection bias in respondents'
choice of detailed alters or if members of group \(\alpha\) are especially inconspicuous,
then we expect \(\Delta_\alpha < 0\).

Fig. \ref{fig:ic} shows the average normalized difference (\(\Delta_\alpha\)) for internal
consistency checks based on reported connections to and from each of twelve
different age-sex groups, by country and by tie definition.
Several notable features emerge from Fig. \ref{fig:ic}.
First, for many of the internal consistency checks, the averaged normalized differences
are close to zero, or have confidence intervals that contain zero.
Second, Fig. \ref{fig:ic} suggests that reports based on the meal network are, on average,
more internally consistent than reports based on conversational contact (confirmed in Appendix \ref{sec:additional-results}).
Third, there appears to be no universal pattern that describes deviations in internal consistency checks that are not close to zero.
Taking the example of Indonesia, the average normalized differences for younger age groups suggest
that young women may be relatively conspicuous or that young women are relatively homophilous\footnote{Conspicuousness and homophilic reporting are not distinguishable from the data. In this discussion, we focus on conspicuousness; however, instead of Indonesian women being conspicuous, it could also be the case that Indonesian women have homophilic selection biases in choosing their
  detailed alters (i.e., they tend to report other women at a higher rate than would be expected from simple random sampling of their network members).}.
On the other hand, young men are relatively inconspicuous or relatively heterophilous.
In Brazil and Colombia, similar patterns appear for the conversational contact network.
In Great Britain and the United States, however, most of the IC checks suggest that reports
are internally consistent.

\hypertarget{comparing-tie-definition-accuracy}{%
\subsection{Comparing tie definition accuracy}\label{comparing-tie-definition-accuracy}}

Fig. \ref{fig:ic-diff} directly compares the difference in internal consistency results
for the conversational contact and meal networks.
The figure shows the estimated sampling distrubution of \(\text{TAE}\), the total
absolute difference between the internal consistancy checks for the
conversational contact network and the internal consistency checks for the meal
network:

\begin{equation}
\text{TAE} = \sum_\alpha \left(|\Delta_{\alpha,\text{cc}}| - |\Delta_{\alpha,\text{meal}}| \right),
\end{equation}

\noindent where \(|\Delta_{\alpha,\text{cc}}|\) and \(|\Delta_{\alpha,\text{meal}}|\) are the absolute
internal consistency check statistics based on group \(\alpha\) for the conversational
contact and meal networks (i.e., the absolute value of Eq. \ref{eq:ic-delta-defn}).
Thus, TAE is a summary of how well the internal consistency checks perform across all
age-sex groups for the conversational contact network minus the meal network.
Since values of \(|\Delta_\alpha|\) close to 0 indicate more internally consistent reports,
when TAE is positive, that suggests that the meal network is more internally consistent;
conversely, when TAE is negative, that suggests that the conversational contact network is
more internally consistent.
For all countries except for Indonesia, the majority of the mass of the
estimated distribution is greater than 0, suggesting that the meal network
reports are more internally consistent than conversational contact network
reports (Tbl.~\ref{tbl:ic_err}).

\begin{figure*}
  \centering
  \includegraphics[width=3in]{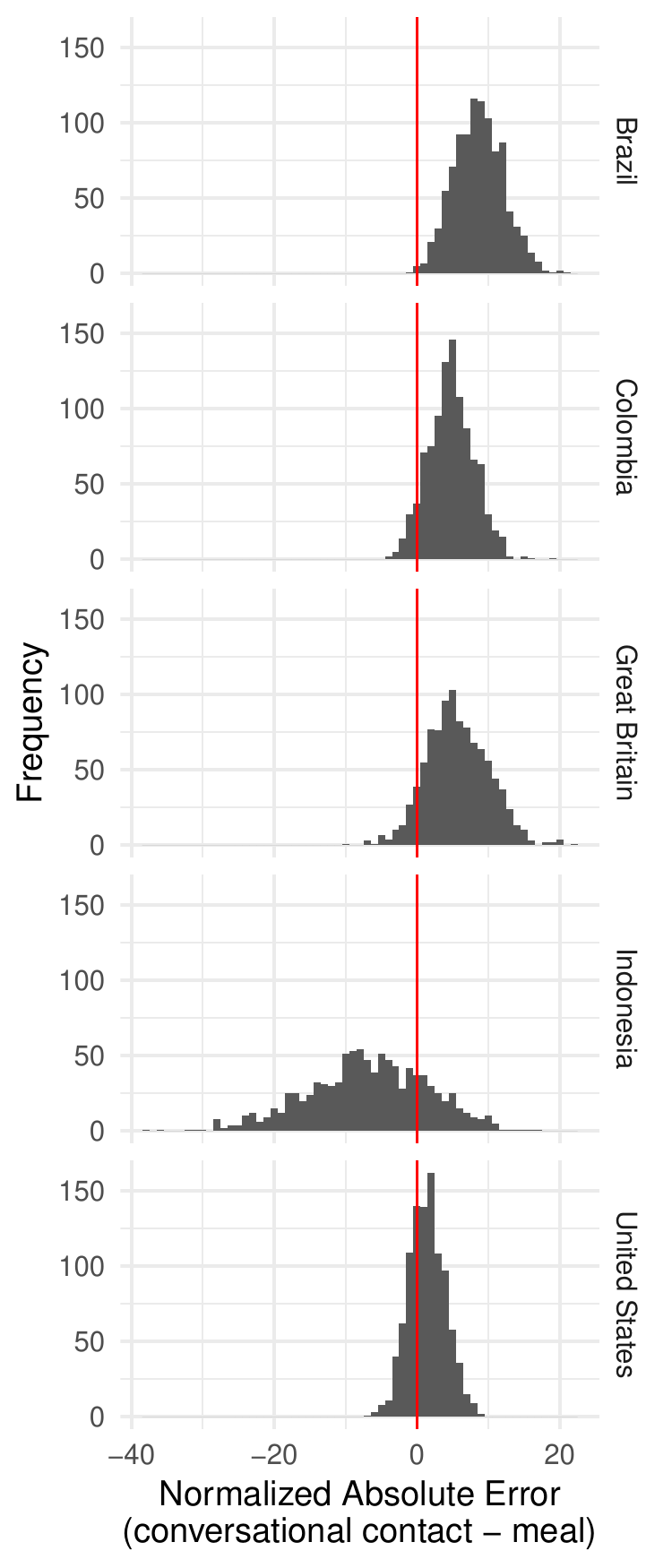}
  \caption{Estimated sampling distribution of the difference between the normalized square error for internal consistency checks from the conversational contact network and from the meal network. For all countries except for Indonesia, the meal network was more internally consistent than the conversational contact network (Table \ref{tbl:ic_err}).}
  \label{fig:ic-diff}
\end{figure*}

\hypertarget{estimates-of-internet-adoption}{%
\subsubsection*{Estimates of internet adoption}\label{estimates-of-internet-adoption}}
\addcontentsline{toc}{subsubsection}{Estimates of internet adoption}

\begin{figure}[p]
    \centering
        \includegraphics[]{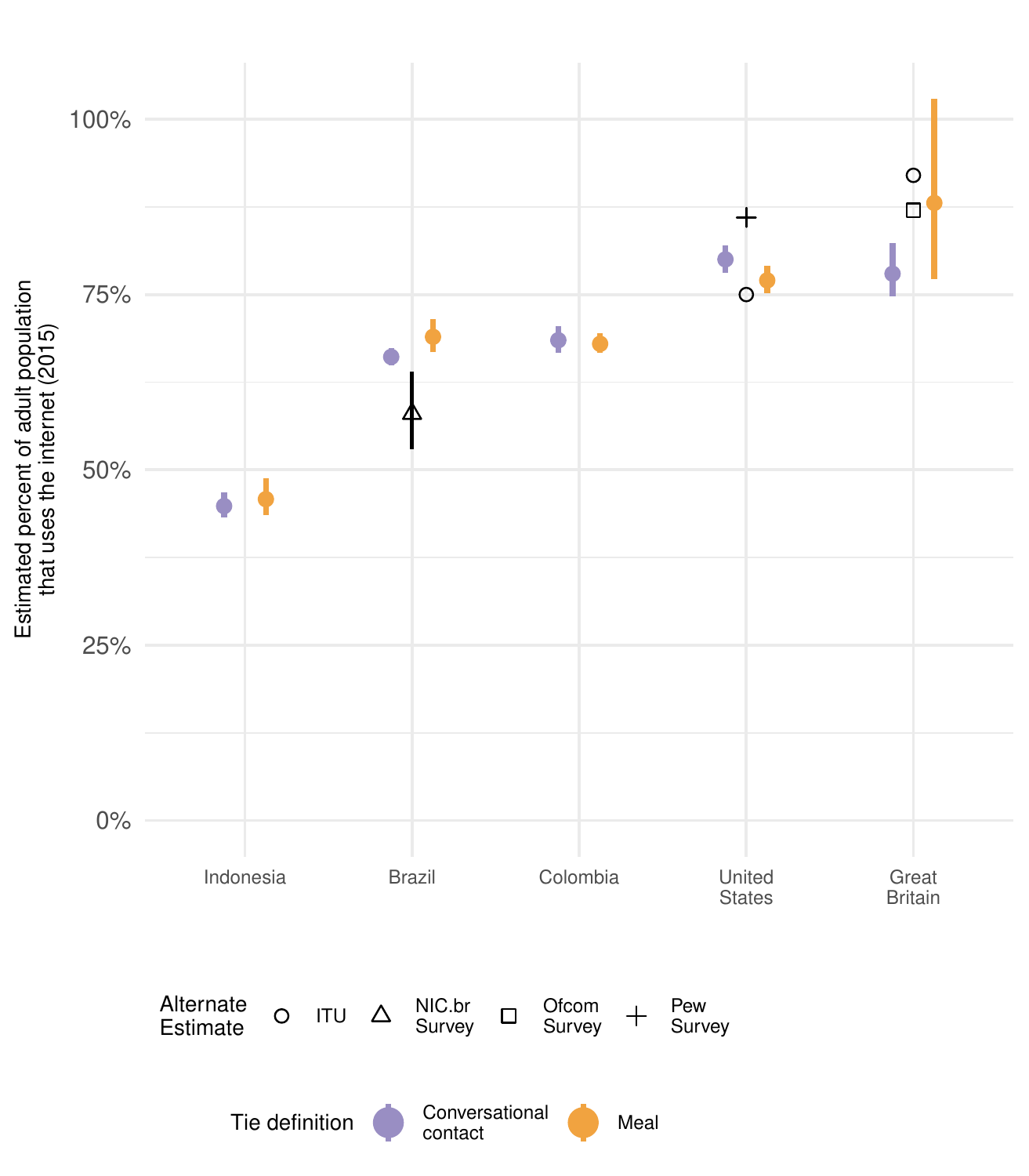}
        \caption{
        Estimated percentage of 2015 adult population that uses the internet, by country and for each of the two networks. 95\% confidence intervals are based on the estimated sampling distribution from the rescaled bootstrap.
        For comparison, estimates from alternate sources are shown where available.
In Great Britain, comparison estimates are available the ITU and from an Ofcom
survey; in the United States, comparison estimates are available for
from the ITU and from Pew surveys; and in Brazil, comparison estimates are available for a NIC.br Survey.
The confidence interval around the Brazil estimate is based on the survey's published margin of error. 
        }
        \label{fig:estimates}
\end{figure}

Fig. \ref{fig:estimates} shows estimated internet adoption for each country in our
sample, using each tie definition\footnote{For the purposes of this study, we say that a person has adopted the internet if she has used the internet on a computer or a phone in the last 30 days; Appendix \ref{sec:survey-methods} shows our survey instrument.}.
Two findings emerge from Fig. \ref{fig:estimates}.
First, estimated internet adoption rates are very similar for the conversational
contact and for the meal networks; in all countries, the confidence intervals
for estimates from the two tie definitions overlap.
Second, the countries can be divided into three groups according to estimated
adoption rates:
the United States and Great Britain have the highest rates of internet adoption (above 75\%);
Brazil and Colombia have estimated internet adoption rates between 50\% and 75\%;
and Indonesia has estimated adoption rates below 50\%.
This ordering is consistent with what would be predicted if economic factors such as
GDP per capita were the main driver of internet adoption.

Ideally, we would evaluate our estimator by comparing it to gold standard
measurements of internet adoption in each of the five countries.
Unfortunately, no such gold standard currently exists.
Therefore, in order to further assess the plausibility of the estimates presented in
Fig. \ref{fig:estimates}, we compared our results to existing internet adoption
estimates for Great Britain, the United States, and Brazil, the countries where high-quality alternative
estimates were available\footnote{Our comparisons come from the Pew Research Center
  (Pew Research Center 2018), which is based on a national phone survey in
  the US; from an Ofcom Survey in the UK (Ofcom 2016); from estimates
  reported by the International Telecommunications Union (ITU 2018);
  and from a household survey conducted by NIC.br in Brazil (NIC.br 2016).
  Note that the ITU estimate for the US has all people over aged 3 in the
  denominator and the NIC estimate for Brazil has all people over aged 10 in the
  denominator. All other estimates are for adults.}.
The results show that the fast and inexpensive network reporting estimates
are within the range of other estimates in the United States,
similar to or slightly lower than other estimates in Great Britain,
and somewhat higher than the other estimate for Brazil.

\hypertarget{sec:results-summary}{%
\subsection*{Summary and discussion}\label{sec:results-summary}}
\addcontentsline{toc}{subsection}{Summary and discussion}

We found that estimates of internet adoption from the two different networks
were very similar (Fig. \ref{fig:estimates}). We could not validate our estimates by
comparing them to gold-standard measurements of internet adoption rates because
such a gold standard was not available.
However, a comparison to high-quality alternative estimates in the United
States, Great Britain, and Brazil showed that the network reporting estimates are
consistent with other sources of estimates in the United States, slightly
higher than the other estimate for Brazil, and consistent
or slightly lower than other estimates from Great Britain (Fig. \ref{fig:estimates}).
Thus, we conclude that our fast and inexpensive strategy for obtaining approximate
estimates of internet adoption is promising.

We also found that
(1) in all five countries, reports from the stronger network tie, meals, produced information about fewer
people than the weaker network tie, conversational contact; but,
(2) reports from the stronger network tie produced, on average, more accurate
information than reports from the weaker tie in all countries except for Indonesia (Fig. \ref{fig:ic-diff}).
This finding is consistent with a hypothesized trade-off between the quantity
and quality of information produced by network reports (Feehan et al. 2016);
previous work found support for this theory in network reports about
interactions in the 12 months before the interview.
We find that this tie strength trade-off may operate even when reports are
about interactions that took place the day before the interview.
Future research could compare different time-windows to see if the hypothesized
tradeoff between the quantity and quality of information operates across time
within a fixed type of network tie.
Over time, we hope that a deeper understanding of the relationship between reporting
accuracy and the different dimensions of network tie definitions will accumulate,
leading to useful guidance about how to design studies like ours.

The internal consistency checks suggest that people's reports about their
network members can suffer from reporting errors, and that these reporting
errors vary by who is being reported about (Fig. \ref{fig:ic}).
One possible mechanism for this result could be differential salience of interactions;
another possible mechanism could be homophilic selection of the detailed alters.
This phenomenon is important to understand for measurement, and scientifically
interesting in its own right;
future research could explore different study designs to try and distinguish between
the salience of different demographic groups on the one hand and
selection bias among the detailed alters on the other.
More generally, the internal consistency checks provide a way to evaluate the
quality of reporting from different survey designs, enabling researchers
to experiment with new designs each time data are collected.
Over time, this process may help discover tie definitions
that minimize reporting error (Feehan et al. 2016).

\hypertarget{sec:conclusion}{%
\section{Conclusion}\label{sec:conclusion}}

We showed that a sample of people who are online can be used to estimate
characteristics of a population that is not entirely online.
Our approach is based on the idea that people who are sampled online can
be asked to anonymously report things about other people
to whom they are connected through different kinds of personal networks.
We illustrated our approach by estimating internet adoption in five different
countries around the world.
Our study included a survey experiment that can help inform future efforts
to use online samples to estimate population characteristics.

Our results suggest several possible avenues for future work.
In this study, we focused on simple, design-based estimators.
A natural next step would be to start to build more complex models
using these data.
These models could exploit the relationships that are embedded in the
internal consistency checks as a kind of constraint, estimating adjustments
to ensure that reports are internally consistent.
Such a model could potentially improve the accuracy of the resulting
estimates.
A second next step would be to use our approach to produce estimates
of internet adoption by age and gender.
Finally, future work could explore the possibility of an even simpler estimator
based on asking each respondent about aggregate connections to people who use
the internet (e.g.~``How many of your network members use the internet?'';
Bernard et al. (2010)).
This approach would forgo the ability to conduct internal consistency checks and
to produce estimates by age and gender, but it would be even simpler and shorter than
the approach we used here.

We view our method as a complement to other promising approaches to producing
population-level estimates using online samples.
For example, one stream of research focuses on using changes over time among
members of the online sample to estimate population changes; this approach can
be useful for studying topics like migration
(e.g., Zagheni and Weber 2012).
A second stream of research uses models
that relate people in the online sample to the general population
using covariate information observed in both sources
(e.g., Goel et al. 2015; Fatehkia et al. 2018).
We expect sampling and interviewing people about members of their
offline networks will be especially promising in situations where few or
no people in the group being studied can be expected to be in the online
sample; but we also expect that there will be situations where these alternatives
are more apprporiate than network reporting.
As the field of digital demography emerges, it will be important deepen our
understanding of the trade-offs between these approaches, and to continue to
develop new methods for producing population estimates from an online sample.

We also see our approach as a complement, rather than a replacement for conventional surveys.
The ideal situation would combine frequent, inexpensive estimates,
such as the ones described here, with less frequent conventional surveys.
For example, a conventional probability sample of the general population in a country
could be used to empirically estimate the average number of meals shared between
an internet user and a Facebook user; with direct estimates of that quantity,
the need for a key assumption in our estimator could be completely removed.
More generally, a conventional probability sample survey can both be used to
assess the accuracy of the fast and cheap estimates, and they can also be used
to try to measure and relax some of the assumptions required by the faster,
cheaper strategy.

\clearpage

\hypertarget{references}{%
\section*{References}\label{references}}
\addcontentsline{toc}{section}{References}

\hypertarget{refs}{}
\leavevmode\hypertarget{ref-bauernschuster_surfing_2014}{}%
Bauernschuster, S., Falck, O., \& Woessmann, L. (2014). Surfing alone? The internet and social capital: Evidence from an unforeseeable technological mistake. \emph{Journal of Public Economics}, \emph{117}, 73--89.

\leavevmode\hypertarget{ref-bernard_counting_2010}{}%
Bernard, H. R., Hallett, T., Iovita, A., Johnsen, E. C., Lyerla, R., McCarty, C., et al. (2010). Counting hard-to-count populations: The network scale-up method for public health. \emph{Sexually Transmitted Infections}, \emph{86}(Suppl 2), ii11--ii15. \url{http://sti.bmj.com/content/86/Suppl_2/ii11.short}

\leavevmode\hypertarget{ref-bernard_estimating_1991}{}%
Bernard, H. R., Johnsen, E. C., Killworth, P. D., \& Robinson, S. (1991). Estimating the size of an average personal network and of an event subpopulation: Some empirical results. \emph{Social Science Research}, \emph{20}(2), 109--121.

\leavevmode\hypertarget{ref-billari_broadband_2018}{}%
Billari, F. C., Giuntella, O., \& Stella, L. (2018). Broadband internet, digital temptations, and sleep. \emph{Journal of Economic Behavior \& Organization}, \emph{153}, 58--76.

\leavevmode\hypertarget{ref-billari_does_2019}{}%
Billari, F. C., Giuntella, O., \& Stella, L. (2019). Does broadband Internet affect fertility? \emph{Population Studies}, \emph{0}(0), 1--20.

\leavevmode\hypertarget{ref-brass_methods_1975}{}%
Brass, W. (1975). Methods for estimating fertility and mortality from limited and defective data. \emph{Methods for estimating fertility and mortality from limited and defective data.} \url{http://www.cabdirect.org/abstracts/19762901082.html}

\leavevmode\hypertarget{ref-brewer_prostitution_2000}{}%
Brewer, D. D., Potterat, J. J., Garrett, S. B., Muth, S. Q., Roberts, J. M., Kasprzyk, D., et al. (2000). Prostitution and the sex discrepancy in reported number of sexual partners. \emph{Proceedings of the National Academy of Sciences}, \emph{97}(22), 12385--12388.

\leavevmode\hypertarget{ref-cesare_promises_2018}{}%
Cesare, N., Lee, H., McCormick, T., Spiro, E., \& Zagheni, E. (2018). Promises and Pitfalls of Using Digital Traces for Demographic Research. \emph{Demography}, \emph{55}(5), 1979--1999.

\leavevmode\hypertarget{ref-clarke_has_2006}{}%
Clarke, G. R. G., \& Wallsten, S. J. (2006). Has the Internet Increased Trade? Developed and Developing Country Evidence. \emph{Economic Inquiry}, \emph{44}(3), 465--484.

\leavevmode\hypertarget{ref-cohen_use_2011}{}%
Cohen, R. A., \& Adams, P. F. (2011). \emph{Use of the Internet for health information: United States, 2009}. US Department of Health and Human Services, Centers for Disease Control and Prevention, National Center for Health Statistics.

\leavevmode\hypertarget{ref-eckman_assessing_2014}{}%
Eckman, S., Kreuter, F., Kirchner, A., Jäckle, A., Tourangeau, R., \& Presser, S. (2014). Assessing the mechanisms of misreporting to filter questions in surveys. \emph{Public Opinion Quarterly}, \emph{78}(3), 721--733.

\leavevmode\hypertarget{ref-fatehkia_using_2018}{}%
Fatehkia, M., Kashyap, R., \& Weber, I. (2018). Using Facebook ad data to track the global digital gender gap. \emph{World Development}, \emph{107}, 189--209.

\leavevmode\hypertarget{ref-feehan_network_2015}{}%
Feehan, D. M. (2015). \emph{Network reporting methods} (PhD thesis). Princeton University. Retrieved from \url{http://gradworks.umi.com/37/29/3729745.html}

\leavevmode\hypertarget{ref-feehan_generalizing_2016}{}%
Feehan, D. M., \& Salganik, M. J. (2016a). Generalizing the Network Scale-Up Method: A New Estimator for the Size of Hidden Populations. \emph{Sociological Methodology}, \emph{46}(1), 153--186. \url{http://128.84.21.199/pdf/1404.4009.pdf}

\leavevmode\hypertarget{ref-feehan_surveybootstrap_2016}{}%
Feehan, D. M., \& Salganik, M. J. (2016b). \emph{Surveybootstrap: Tools for the Bootstrap with Survey Data}. \url{https://CRAN.R-project.org/package=surveybootstrap}

\leavevmode\hypertarget{ref-feehan_quantity_2016}{}%
Feehan, D. M., Umubyeyi, A., Mahy, M., Hladik, W., \& Salganik, M. J. (2016). Quantity Versus Quality: A Survey Experiment to Improve the Network Scale-up Method. \emph{American Journal of Epidemiology}, kwv287.

\leavevmode\hypertarget{ref-friemel_digital_2016}{}%
Friemel, T. N. (2016). The digital divide has grown old: Determinants of a digital divide among seniors. \emph{new media \& society}, \emph{18}(2), 313--331. \url{http://journals.sagepub.com/doi/abs/10.1177/1461444814538648}

\leavevmode\hypertarget{ref-goel_nonrepresentative_2015}{}%
Goel, S., Obeng, A., \& Rothschild, D. (2015). Non-representative surveys: Fast, cheap, and mostly accurate. In \emph{Working Paper}. \url{http://adamobeng.com/download/FastCheapAccurate.pdf}

\leavevmode\hypertarget{ref-greenwell_health_2018}{}%
Greenwell, F., \& Salentine, S. (2018). \emph{Health information system strengthening: Standards and best practices for data sources} (pp. 58--63). Chapel Hill, NC, USA: MEASURE Evaluation, University of North Carolina.

\leavevmode\hypertarget{ref-haight_revisiting_2014}{}%
Haight, M., Quan-Haase, A., \& Corbett, B. A. (2014). Revisiting the digital divide in Canada: The impact of demographic factors on access to the internet, level of online activity, and social networking site usage. \emph{Information, Communication \& Society}, \emph{17}(4), 503--519. \url{http://www.tandfonline.com/doi/abs/10.1080/1369118x.2014.891633}

\leavevmode\hypertarget{ref-hill_further_1977a}{}%
Hill, K., \& Trussell, J. (1977). Further developments in indirect mortality estimation. \emph{Population Studies}, \emph{31}(2), 313--334. \url{http://www.tandfonline.com/doi/abs/10.1080/00324728.1977.10410432}

\leavevmode\hypertarget{ref-hjort_arrival_2019}{}%
Hjort, J., \& Poulsen, J. (2019). The Arrival of Fast Internet and Employment in Africa. \emph{American Economic Review}, \emph{109}(3), 1032--1079.

\leavevmode\hypertarget{ref-icf_demographic_2004}{}%
ICF. (2004). \emph{Demographic and Health Surveys (various) {[}Datasets{]}.} Rockville, Maryland: ICF {[}Distributor{]}.

\leavevmode\hypertarget{ref-icf_what_2018}{}%
ICF. (2018). What we do: Survey process. \url{https://dhsprogram.com/What-We-Do/Survey-Process.cfm}

\leavevmode\hypertarget{ref-itu_percentage_2018}{}%
ITU. (2018). \emph{Percentage of individuals using the Internet}. Geneva: ITU (International Telecommunications Union). \url{https://www.itu.int/en/ITU-D/Statistics/Documents/statistics/2018/Individuals_Internet_2000-2017.xls}

\leavevmode\hypertarget{ref-kho_impact_2018}{}%
Kho, K., Lakdawala, L. K., \& Nakasone, E. (2018). Impact of Internet Access on Student Learning in Peruvian Schools, 51.

\leavevmode\hypertarget{ref-lavallee_indirect_2007}{}%
Lavallee, P. (2007). \emph{Indirect sampling}. New York: Springer-Verlag. \url{http://books.google.com/books?hl=en/\&lr=/\&id=o93cnlP9tMMC/\&oi=fnd/\&pg=PA1/\&dq=indirect+sampling+lavallee/\&ots=nhf1KvhIEk/\&sig=_7W13JSq39Iqe1WNclnr3HPk9ts}

\leavevmode\hypertarget{ref-lazer_computational_2009a}{}%
Lazer, D., Pentland, A., Adamic, L., Aral, S., Barabási, A.-L., Brewer, D., et al. (2009). Computational Social Science. \emph{Science}, \emph{323}(5915), 721--723.

\leavevmode\hypertarget{ref-lumley_analysis_2004}{}%
Lumley, T. (2004). Analysis of Complex Survey Samples. \emph{Journal of Statistical Software}, \emph{9}(1), 1--19.

\leavevmode\hypertarget{ref-lumley_complex_2011}{}%
Lumley, T. (2011). \emph{Complex Surveys: A Guide to Analysis Using R}. John Wiley \& Sons.

\leavevmode\hypertarget{ref-maltiel_estimating_2015}{}%
Maltiel, R., Raftery, A. E., McCormick, T. H., \& Baraff, A. J. (2015). Estimating population size using the network scale up method. \emph{Annals of Applied Statistics}, \emph{9}(3), 1247--1277.

\leavevmode\hypertarget{ref-manacorda_liberation_2016}{}%
Manacorda, M., \& Tesei, A. (2016). Liberation Technology: Mobile Phones and Political Mobilization in Africa. \emph{SSRN Electronic Journal}.

\leavevmode\hypertarget{ref-marsden_recent_2005}{}%
Marsden, P. V. (2005). Recent developments in network measurement. \emph{Models and methods in social network analysis}, \emph{8}, 30.

\leavevmode\hypertarget{ref-mossong_social_2008}{}%
Mossong, J., Hens, N., Jit, M., Beutels, P., Auranen, K., Mikolajczyk, R., et al. (2008). Social contacts and mixing patterns relevant to the spread of infectious diseases. \emph{PLoS Med}, \emph{5}(3), e74. \url{http://journals.plos.org/plosmedicine/article?id=10.1371/journal.pmed.0050074}

\leavevmode\hypertarget{ref-nic.br_survey_2016}{}%
NIC.br. (2016). \emph{Survey on the use of information and communication technologies in Brazilian households: ICT households 2015}. São Paulo: Comitê Gestor da Internet no Brasil. \url{https://cetic.br/media/docs/publicacoes/2/TIC_Dom_2015_LIVRO_ELETRONICO.pdf}

\leavevmode\hypertarget{ref-ofcom_adults_2016}{}%
Ofcom. (2016). \emph{Adults' media use and attitudes report}. London: Ofcom.

\leavevmode\hypertarget{ref-parsons_design_2014}{}%
Parsons, V. L., Moriarity, C. L., Jonas, K., Moore, T. F., Davis, K. E., \& Tompkins, L. (2014). Design and estimation for the national health interview survey, 2006-2015.

\leavevmode\hypertarget{ref-perrin_americans_2015}{}%
Perrin, A., \& Duggan, M. (2015). \emph{Americans' Internet Access: 2000-2015}. Pew Research Center.

\leavevmode\hypertarget{ref-pewresearchcenter_internet_2018}{}%
Pew Research Center. (2018). Internet/Broadband Fact Sheet. \url{http://www.pewinternet.org/fact-sheet/internet-broadband/}

\leavevmode\hypertarget{ref-rao_double_1968a}{}%
Rao, J. N. K., \& Pereira, N. P. (1968). On double ratio estimators. \emph{Sankhyā: The Indian Journal of Statistics, Series A (1961-2002)}, \emph{30}(1), 83--90.

\leavevmode\hypertarget{ref-rao_resampling_1988}{}%
Rao, J. N. K., \& Wu, C. F. J. (1988). Resampling inference with complex survey data. \emph{Journal of the American Statistical Association}, \emph{83}(401), 231--241.

\leavevmode\hypertarget{ref-rao_recent_1992}{}%
Rao, J., Wu, C., \& Yue, K. (1992). Some recent work on resampling methods for complex surveys. \emph{Survey Methodology}, \emph{18}(2), 209--217.

\leavevmode\hypertarget{ref-rojas_harnessing_2015}{}%
Rojas, G. (2015, January). Harnessing Technology to Streamline Data Collection. \url{https://blog.dhsprogram.com/harnessing-technology-streamline-data-collection/}

\leavevmode\hypertarget{ref-sarndal_model_2003}{}%
Sarndal, C. E., Swensson, B., \& Wretman, J. (2003). \emph{Model assisted survey sampling}. Springer Verlag. \url{http://books.google.com/books?hl=en/\&lr=/\&id=ufdONK3E1TcC/\&oi=fnd/\&pg=PR5/\&dq=sarndal+swensson+wretman+model+assisted/\&ots=7eZV4u7FOC/\&sig=tdK954DVTis0gvMz7r4SapBVnYg}

\leavevmode\hypertarget{ref-sarndal_estimation_2005}{}%
Särndal, C.-E., \& Lundström, S. (2005). \emph{Estimation in Surveys with Nonresponse}. John Wiley \& Sons.

\leavevmode\hypertarget{ref-sirken_household_1970}{}%
Sirken, M. G. (1970). Household surveys with multiplicity. \emph{Journal of the American Statistical Association}, \emph{65}(329), 257--266.

\leavevmode\hypertarget{ref-tourangeau_motivated_2015}{}%
Tourangeau, R., Kreuter, F., \& Eckman, S. (2015). Motivated misreporting: Shaping answers to reduce survey burden. \emph{Survey measurements. techniques, data quality and sources of error}, 24--41.

\leavevmode\hypertarget{ref-valliant_practical_2013}{}%
Valliant, R., Dever, J. A., \& Kreuter, F. (2013). \emph{Practical tools for designing and weighting survey samples}. New York: Springer.

\leavevmode\hypertarget{ref-vandeursen_digital_2014}{}%
Van Deursen, A. J., \& Van Dijk, J. A. (2014). The digital divide shifts to differences in usage. \emph{New media \& society}, \emph{16}(3), 507--526. \url{http://journals.sagepub.com/doi/abs/10.1177/1461444813487959}

\leavevmode\hypertarget{ref-vigdor_scaling_2014}{}%
Vigdor, J. L., Ladd, H. F., \& Martinez, E. (2014). Scaling the digital divide: Home computer technology and student achievement. \emph{Economic Inquiry}, \emph{52}(3), 1103--1119. \url{http://onlinelibrary.wiley.com/doi/10.1111/ecin.12089/full}

\leavevmode\hypertarget{ref-wasserman_social_1994}{}%
Wasserman, S., Faust, K., \& Urbana-Champaign), S. (. of I. W. (1994). \emph{Social Network Analysis: Methods and Applications}. Cambridge University Press.

\leavevmode\hypertarget{ref-wolter_introduction_2007}{}%
Wolter, K. (2007). \emph{Introduction to Variance Estimation} (2nd ed.). New York: Springer.

\leavevmode\hypertarget{ref-worldbank_world_2016}{}%
World Bank. (2016). \emph{World Development Report 2016: Digital Dividends}. Washington, D.C. \url{http://www.worldbank.org/en/publication/wdr2016}

\leavevmode\hypertarget{ref-zagheni_you_2012}{}%
Zagheni, E., \& Weber, I. (2012). You are where you e-mail: Using e-mail data to estimate international migration rates. In \emph{Proceedings of the 4th annual ACM web science conference} (pp. 348--351). ACM.

\clearpage
\appendix

\renewcommand{\thefigure}{S\arabic{figure}}
\renewcommand{\thetable}{S\arabic{table}}

\setcounter{figure}{0}

\hypertarget{supporting-information}{%
\section*{Supporting Information}\label{supporting-information}}
\addcontentsline{toc}{section}{Supporting Information}

\hypertarget{sec:estimator}{%
\section{Derivation of the estimators}\label{sec:estimator}}

\hypertarget{sampling-setup}{%
\subsubsection*{Sampling setup}\label{sampling-setup}}
\addcontentsline{toc}{subsubsection}{Sampling setup}

We assume a conventional probability sampling setup, following the
theory of design-based sampling; see Sarndal et al. (2003) for an
overview.
When we refer to an estimator as `consistent', we mean design-consistent
(also called Fisher consistent; Sarndal et al. (2003)). Similarly, `unbiased'
means design-unbiased.

Our frame population \(F\) -- the set of people who could potentially
be sampled -- is monthly active Facebook users in a given country\footnote{Throughout this paper, we use the term Facebook users to refer to monthly-active
  Facebook users.}.
The population whose size we are trying to estimate is \(H\), the number of
internet users in the country.
The goal is to use information about people on Facebook's reported offline
personal network connections in order to estimate the size of \(H\).

We assume that we obtain a \emph{probability sample} \(s\) from the frame population,
where we use the same definition of a probability sample as Sarndal et al. (2003).
To briefly review, we assume that the sample \(s\) is chosen from among the
members of the frame population \(F\) using a known random sampling method.
The probability that \(i \in F\) is included in the sample \(s\), called \(i\)'s
\emph{inclusion probability}, is written \(\pi_i\).
We require that \(\pi_i > 0\) for all \(i \in F\).
We call the \(w_i = \frac{1}{\pi_i}\) the \emph{expansion weight} for unit \(i \in F\).

Several of the estimators we study are ratio or compound ratio estimators.
The literature on design-based sampling has established that if
each component estimator is consistent and unbiased, then compound
ratio estimators are design-consistent but,
strictly speaking, compound ratio estimators are not unbiased.
Fortunately, a large literature has studied this problem and such estimators
are typically found to be very nearly unbiased, both in theory and in
practice\footnote{We do not expect the situations in which compound ratio estimators
  would be biased to be relevant to our study; the biggest concern is typically
  when the denominator of \(\widehat{R}\) is very small, which is not likely in
  our applications.}. Thus, we refer to these compound ratio estimators
as \emph{essentially unbiased}.
The following result formally establishes these important properties of
compound ratio estimators; which we will use these properties below.

~

\begin{Result}
\label{res:compound-ratio} Suppose that
\(\widehat{y}_1, \dots, \widehat{y}_n\) are estimators that are
consistent and unbiased for \(Y_1, \dots, Y_n\) respectively. Then the
compound ratio estimator

\begin{equation}
\widehat{R} = \frac{\widehat{y}_1 \dots \widehat{y}_k}{\widehat{y}_{k+1} \dots \widehat{y}_n}.
\end{equation}

is consistent and essentially unbiased for
\(R = (Y_1 \dots Y_{k}) / (Y_{k+1} \dots Y_n)\).
\end{Result}

\begin{proof}
See Rao and Pereira (1968), Wolter (2007) (pg. 233), and Feehan and
Salganik (2016a) for more details.
\end{proof}

We adhere to the notation used in previous papers about network scale-up and
network reporting (Feehan 2015; Feehan and Salganik 2016a; Feehan et al. 2016):

\begin{itemize}
\tightlist
\item
  \(y_{i, B}\) is the number of reported connections from person \(i\) to members of group \(B\)
\item
  \(y_{A,B} = \sum_{i \in A} y_{i,B}\) is the number of reported connections from members of group \(A\) to group \(B\)
\item
  \(d_{i,B}\) is the number of undirected connections in the social network between \(i\)
  and members of group \(B\)
\item
  \(d_{A,B} = \sum_{i \in A} d_{i, B}\) is the total number of undirected connections in the social network between members of group \(A\) and members of group \(B\)
\item
  \(v_{i,A}\) is the \emph{visibility} of \(i\) to group \(A\) -- i.e., the number of times that \(i\) would be reported if everyone in \(A\) was interviewed
\item
  \(v_{B, A} = \sum_{i \in B} v_{i,A}\) is the total visibility of members of group \(B\) to group \(A\)
\item
  \(\widehat{y} \rightarrow Y\) is shorthand for `\(\widehat{y}\) is a consistent and unbiased estimator for \(Y\)'
\item
  \(\widehat{y} \rightsquigarrow Y\) is shorthand for `\(\widehat{y}\) is a consistent and essentially unbiased estimator for \(Y\)'
\item
  \(y^+_{F,H}\) is the number of reported connections from \(F\) to \(H\) that actually lead to \(H\). If \(y^+_{F,H} = y_{F,H}\) then we say that there are \emph{no false positive reports}
\item
  \(N_A\) is the size of set \(A\) (i.e., the number of people in \(A\))
\end{itemize}

\hypertarget{aggregate-reporting-framework}{%
\subsubsection*{Aggregate reporting framework}\label{aggregate-reporting-framework}}
\addcontentsline{toc}{subsubsection}{Aggregate reporting framework}

We develop an estimator using the network reporting framework, an approach that
builds upon insights from several different streams of previous research on
sampling
(Bernard et al. 1991; Feehan 2015; Feehan and Salganik 2016a; Lavallee 2007; Sirken 1970).
Feehan (2015) shows that researchers can develop estimators based on network
reports using either an individual or an aggregate multiplicity approach.
Since we do not collect information at the level of detail required by
individual multiplicity estimation, we adopt an aggregate multiplicity approach
in this study.
This aggregate multiplicity approach is similar to the network scale-up method
(Bernard et al. 2010, 1991; Feehan and Salganik 2016a; Maltiel et al. 2015).

~

\begin{Result}
\label{res:agg-mult-id} Suppose that a census of the frame population
\(F\) is interviewed and asked to report about their connections to a
group \(Z\). Call the total number of reported connections \(y_{F,Z}\)
and suppose \(y_{F,Z} > 0\). Further, suppose that there are no false
positive reports, so that \(y_{F,Z} = y^+_{F,Z}\). Finally, suppose that
\(\bar{v}_{Z,F}\) is the average visibility of members of \(Z\); that
is, \(\bar{v}_{Z,F}\) is the average number of times that a member of
\(Z\) is reported by someone in \(F\). Then \begin{equation}
N_H = \frac{y_{F,H}}{\bar{v}_{H,F}}.
\label{eq:ap-aggmult-id}
\end{equation}
\end{Result}

\begin{proof}
See Feehan (2015) and Feehan and Salganik (2016a).
\end{proof}

To see the intuition behind the aggregate multiplicity approach from
Result \ref{res:agg-mult-id}, suppose we
conducted a census of the frame population, asking every frame population
member to tell us how many members of her personal network were online.
Simply adding up the number of reported connections to internet users would
produce a number that is larger than the number of internet users because each internet user can be reported more than once.
Thus, in order to adjust for this over-counting, aggregate multiplicity estimators divide
an estimate for the total number of reports by an estimate of hidden population
members' \emph{visibility}.
The visibility is the number of times an average member of the hidden
population would be reported if everyone on the frame population responded to
the survey.
In this study, the visibility is the number of times that the average internet
user in a given country would be reported as an internet user, if everyone on Facebook in the
country responded to the survey. Dividing the estimated total number of reported connections
to people on the internet by the estimated visibility adjusts for the over-counting that would
occur if the reports were used to directly estimate the number of internet users.

Given the aggregate multiplicity identity, our basic approach is to develop data
collection strategies and statistical estimators that enable us to estimate the
numerator and denominator of the identity in Eq. \ref{eq:ap-aggmult-id}.
In the remainder of this Appendix, we develop necessary technical results to use
the identity in Eq. \ref{eq:ap-aggmult-id} to estimate the number of internet users in a given
country.

\hypertarget{estimates-about-detailed-alters}{%
\subsubsection*{Estimates about detailed alters}\label{estimates-about-detailed-alters}}
\addcontentsline{toc}{subsubsection}{Estimates about detailed alters}

Result \ref{res:yhat-f-z} formalizes a situation where respondents are sampled
and then asked about a sample of their network members.
Result \ref{res:yhat-f-z} is stated in terms of an arbitrary dichotomous trait \(z\) that
respondents report about their personal network members; for example, \(z\)
could be Facebook usage, internet usage, gender, or membership in an age group.

~

\begin{Result}
\label{res:yhat-f-z} Suppose we have a sample \(s\) taken from the frame
population using a probability sampling design. Call the expansion
weights given by the sampling design \(w_i\) for each \(i \in s\).
Further, suppose that for each \(i \in s\), we obtain information from a
simple random subsample \(s_{i}\) of size \(r_i\) from the \(d_i\)
people in \(i\)'s personal network. Let \(z_{ij}\) be an indicator
variable for whether or not \(i\) reports that \(j\) has trait \(Z\),
and let \(z_i = \sum_{j \in s_i} z_{ij}\) be the total number of
detailed alters respondent \(i\) reports having trait \(Z\). Then the
estimator

\begin{equation}
\widehat{y}_{F,Z} = \sum_{i \in s} w_i \frac{d_i}{r_i} z_i
\end{equation}

is consistent and unbiased for \(y_{F,Z}\), the total number of reported
connections to people with trait \(Z\) in a census of the frame
population in which respondents report about everyone in their networks.
\end{Result}

\begin{proof}
First, we note that we can consider this to be a multi-stage sample,
where the first stage(s) lead to selection of the respondent and the
final stage is the subsampling of detailed alters within each
respondent's network. Since the final stage is a simple random sample of
\(r_i\) out of \(d_i\) network members, the design weight for the final
stage is \(\frac{d_i}{r_i}\) for each detailed alter. In order to show
that the estimator is unbiased, we take expectations with respect to the
multi-stage sampling design:

\begin{equation}
\begin{aligned}
 \mathbb{E} [\widehat{y}_{F,Z}] 
&=  \mathbb{E} _I[\sum_{i \in s} w_i  \mathbb{E} _{i}[\frac{d_i}{r_i} z_i | s ]]\\
&= \sum_{i \in F} \pi_i w_i  \mathbb{E} _{i}[\frac{d_i}{r_i} z_i | s]]\\
&= \sum_{i \in F} \pi_i w_i \left( \sum_{j \sim i} \pi^i_j \frac{d_i}{r_i} z_{ij}\right),
\end{aligned}
\end{equation}

where the outer expectation \( \mathbb{E} _I[\cdot]\) is taken with
respect to the sampling of respondents and the inner expectation
\( \mathbb{E} _i[\cdot | s]\) is taken with respect to the sampling of
detailed alters within each sampled respondent; \(j \sim i\) indexes
over all of the network members \(j\) that \(i\) could potentially
report about; and we have written \(\pi_i\) for the inclusion
probability of respondent \(i\) under the sampling design, and
\(\pi^i_j\) for the inclusion probability of respondent \(i\)'s \(j\)th
network member under the subsampling design.

By definition, \(w_i = \frac{1}{\pi_i}\) and
\(\pi^i_j = \frac{r_i}{d_i}\). Thus, continuing from above, we have

\begin{equation}
\begin{aligned}
 \mathbb{E} [\widehat{y}_{F,Z}] 
&= \sum_{i \in F} \pi_i w_i \left( \sum_{j \sim i} \pi^i_j \frac{d_i}{r_i} z_{ij}\right)\\
&= \sum_{i \in F} \left( \sum_{j \sim i} \pi^i_j \frac{d_i}{r_i} z_{ij}\right)\\
&= \sum_{i \in F} y_{i, Z}\\
&= y_{F, Z}. 
\end{aligned}
\end{equation}

So we have shown that the estimator is unbiased for \(y_{F,Z}\).

Finally, in a census of the frame population where every respondent
reports about all of her network members, \(s = F\), \(\pi_i = 1\),
\(\pi^i_j = 1\), \(z_i = y_{i,Z}\), and \(r_i = d_i\) for all \(i\) and
\(j\). Thus

\begin{equation}
\widehat{y}_{F,Z} 
= \sum_{i \in s} w_i \frac{d_i}{r_i} z_i
= \sum_{i \in F} y_{i,Z}
= y_{F,Z}
\end{equation}

So the estimator is design-consistent.
\end{proof}

\begin{Corollary}
\label{res:ybar-f-z} Under the conditions of Result \ref{res:yhat-f-z},
the estimator

\begin{equation}
\widehat{\bar{y}}_{F,Z} = \frac{\sum_{i \in s} w_i \frac{d_i}{r_i} z_i}{\sum_{i \in s} w_i}
\end{equation}

is consistent and essentially unbiased for \(\bar{y}_{F, Z}\).
\end{Corollary}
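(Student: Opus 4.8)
The plan is to recognize $\widehat{\bar{y}}_{F,Z}$ as a ratio of two estimators, each of which is consistent and unbiased, and then invoke Result~\ref{res:compound-ratio} to transfer those properties to the ratio. Writing the estimator as
\begin{equation}
\widehat{\bar{y}}_{F,Z} = \frac{\sum_{i \in s} w_i \frac{d_i}{r_i} z_i}{\sum_{i \in s} w_i} = \frac{\widehat{y}_{F,Z}}{\sum_{i \in s} w_i},
\end{equation}
I would treat the numerator and denominator as the two component estimators and handle each in turn.

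First, the numerator $\widehat{y}_{F,Z} = \sum_{i \in s} w_i \frac{d_i}{r_i} z_i$ is exactly the estimator from Result~\ref{res:yhat-f-z}, which has already been shown to be consistent and unbiased for $y_{F,Z}$; so nothing new is required there. Second, I would verify that the denominator $\sum_{i \in s} w_i$ is the Horvitz--Thompson estimator of the frame population size $N_F$ and that it is consistent and unbiased for $N_F$. Unbiasedness follows by taking the expectation over the sampling design,
\begin{equation}
\E\left[\sum_{i \in s} w_i\right] = \sum_{i \in F} \pi_i w_i = \sum_{i \in F} 1 = N_F,
\end{equation}
using $w_i = 1/\pi_i$; design-consistency follows because in a census $s = F$, $\pi_i = 1$, and $w_i = 1$, so $\sum_{i \in s} w_i = N_F$ exactly.

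Finally, since $\bar{y}_{F,Z} = y_{F,Z} / N_F$ by definition, the estimator $\widehat{\bar{y}}_{F,Z}$ is a compound ratio estimator (with $k=1$ and $n=2$ in the notation of Result~\ref{res:compound-ratio}) built from two consistent and unbiased component estimators whose ratio of targets is $y_{F,Z}/N_F = \bar{y}_{F,Z}$. Applying Result~\ref{res:compound-ratio} then immediately gives that $\widehat{\bar{y}}_{F,Z}$ is consistent and essentially unbiased for $\bar{y}_{F,Z}$, as claimed. I do not expect any real obstacle here: the only mild piece of work is confirming that the denominator estimates $N_F$, after which the statement is a direct consequence of the two preceding results, which is presumably why it is stated as a corollary rather than a standalone result.
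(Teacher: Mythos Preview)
Your proposal is correct and follows essentially the same approach as the paper: identify the numerator as the estimator from Result~\ref{res:yhat-f-z}, identify the denominator as the Horvitz--Thompson estimator $\widehat{N}_F = \sum_{i\in s} w_i$ of the frame population size, and conclude that the ratio inherits consistency and essential unbiasedness. The only cosmetic difference is that you invoke Result~\ref{res:compound-ratio} directly, whereas the paper notes that this is a Hajek-type estimator and cites Sarndal et al.\ (2003) for the standard result; these amount to the same argument.
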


\begin{proof}
By Result \ref{res:yhat-f-z}, the numerator is consistent and unbiased
for \(y_{F,Z}\), and the denominator is a sample-based estimate for the
size of the frame population, \(\widehat{N}_F = \sum_{i \in s} w_i\).
Thus, this is a Hajek-type estimator. See (Sarndal et al. 2003) for a
proof that Hajek estimators are consistent and essentially unbiased.
\end{proof}

Note that
Result \ref{res:yhat-f-z} implies that Eq. \ref{eq:estimator-numerator} is
consistent and unbiased for \(y_{F,H}\) and
Corollary \ref{res:yhat-f-z} implies that Eq. \ref{eq:degree-estimator} is
consistent and unbiased for \(\bar{y}_{F,F}\).

\hypertarget{assembling-the-estimator}{%
\subsubsection*{Assembling the estimator}\label{assembling-the-estimator}}
\addcontentsline{toc}{subsubsection}{Assembling the estimator}

The next estimator, Result \ref{res:nh}, shows that if we can estimate
the total reported connections from frame population members to internet users,
and if we can estimate the average visibility of internet users to frame
population members, then we can estimate the number of internet users.

~

\begin{Result}
\label{res:nh} Suppose that the \(\widehat{y}_{F,H}\) is a consistent
and unbiased estimator for \(y_{F,H}\) and that
\(\widehat{\bar{y}}_{F,F}\) is a consistent and essentially unbiased
estimator for \(\bar{y}_{F,F}\). Further, suppose that reports are
accurate in aggregate, so that \(y_{F,H} = d_{F,H}\) and
\(y_{F,F} = d_{F,F}\). Finally, suppose that

\begin{equation}
\begin{aligned}
\bar{d}_{H,F} &= \bar{d}_{F,F}.
\end{aligned}
\label{eq:ap-equal-tie-fbnofb}
\end{equation}

Then the estimator

\begin{equation}
\widehat{N}_H = \frac{\widehat{y}_{F,H}}{\widehat{\bar{y}}_{F,F}}
\end{equation}

is consistent and essentially unbiased for \(N_H\).
\end{Result}

\begin{proof}
Since \(\widehat{y}_{F,H} \rightarrow y_{F,H}\) and
\(\widehat{\bar{y}}_{F,F} \rightsquigarrow \bar{y}_{F,F}\), Result
\ref{res:compound-ratio} shows that
\(\widehat{N}_H = \frac{\widehat{y}_{F,H}}{\widehat{\bar{y}}_{F,F}} \rightsquigarrow \frac{y_{F,H}}{\bar{y}_{F,F}}\).
It remains to show that \(\frac{y_{F,H}}{\bar{y}_{F,F}}\) is equal to
\(N_H\). By the condition that reports are accurate in aggregate,
\(y_{F,H} = d_{F,H}\) and \(y_{F,F} = d_{F,F}\). Thus,

\begin{equation}
\frac{y_{F,H}}{\bar{y}_{F,F}} = \frac{d_{F,H}}{\bar{d}_{F,F}}.
\end{equation}

Next, using the condition that \(\bar{d}_{F,F} = \bar{d}_{H,F}\), we
have

\begin{equation}
\frac{d_{F,H}}{\bar{d}_{F,F}}
= \frac{d_{F,H}}{\bar{d}_{H,F}} = N_H \frac{d_{F,H}}{d_{H,F}} = N_H,
\end{equation}

where the last step follows from the fact that we are assuming a
symmetric type of network tie, meaning that the number of connections
from \(F\) to \(H\) must be equal to the number of connections from
\(H\) to \(F\).
\end{proof}

Result \ref{res:nh} relies upon
the condition that \(\bar{d}_{H,F} = \bar{d}_{F,F}\) (Eq. \ref{eq:ap-equal-tie-fbnofb}), which requires that
two quantities be equal: (1) the rate at which someone who is on the internet
shares a meal with someone who is on Facebook (\(\bar{d}_{H,F}\));
and, (2) the rate at which someone who is on Facebook shares a meal with
someone who is also on Facebook (\(\bar{d}_{F,F}\)).
This assumption could be violated if, for example, people frequently
organize sharing a meal together using Facebook (without inviting other
people).

To further understand the condition in Eq. \ref{eq:ap-equal-tie-fbnofb},
note that since \(F \subset H\) (i.e., everyone on Facebook is also on the
Internet), it follows that

\begin{align}
\bar{d}_{H,F} &= p_{F|H} \bar{d}_{F,F} + (1-p_{F|H}) \bar{d}_{H-F,F}
\end{align}

\noindent where \(p_{F|H} = \frac{N_F}{N_H}\) is the prevalence of \(F\) among \(H\), i.e.,
the fraction of people on the internet that is also on Facebook.
Therefore, when the condition in Eq. \ref{eq:ap-equal-tie-fbnofb} holds,
then it is also the case that

\begin{align}
\bar{d}_{F,F} = \bar{d}_{H-F, F}.
\end{align}

Appendix \ref{sec:sensitivity-analysis} introduces a sensitivity framework that researchers
can use to assess how sensitive size estimates are to this condition,
and Appendix \ref{sec:simple-model} introduces simple models that motivate this condition.

\hypertarget{sec:internal-consistency}{%
\section{Internal consistency checks}\label{sec:internal-consistency}}

The internal consistency checks start from an identity that relates
two quantities:
(1) \(d_{F_{-\alpha},F_\alpha}\), the population-level number of connections from everyone who is in \(F\) but not
group \(\alpha\) to everyone who is in \(F\) and in group \(\alpha\);
and (2), \(d_{F_{\alpha}, F_{-\alpha}}\) -- the population-level number of connections from everyone who is
in \(F\) and group \(\alpha\) to everyone who is in \(F\) but not in group \(\alpha\).
Since the networks we ask respondents to report about are symmetric, these two
quantities are identical; however, they can be estimated independently from the data
we collected: the first quantity can be estimated only from respondents who are
not in group \(\alpha\), and the second quantity can be estimated only from
respondents who are in group \(\alpha\).

In order to assess how internally consistent reporting is, we can directly
compute a survey-based estimates for the discrepancy

\begin{equation}
\Delta^0_\alpha = \widehat{d}_{F_{-\alpha},F_\alpha} - \widehat{d}_{F_{\alpha}, F_{-\alpha}}.
\end{equation}

\noindent The closer this quantity is to 0, the more internally consistent reports about group
\(\alpha\) are. However, \(\Delta^0_\alpha\) is influenced by the size of the group \(\alpha\), which makes it
challenging to plot internal consistency checks for several different
groups in the same place (e.g.~Fig. \ref{fig:ic}). Thus, we propose rescaling the IC checks for different
groups to put them on a more similar scale. Specifically, we scale \(\Delta^0_\alpha\) by a factor \(K\) given by

\begin{equation}
\begin{aligned}
K = \frac{N_F}{N_{F_{-\alpha}} N_{F_\alpha}},
\end{aligned}
\label{eq:ic-k}
\end{equation}

\noindent where \(N_{F_{-\alpha}}\) is the number of people in the frame population not in group \(\alpha\)
and \(N_{F_\alpha}\) is the number of people in the frame population who are in group \(\alpha\).
The factor \(K\) is motivated by starting from the identity \(d_{F_{-\alpha},F_\alpha} = d_{F_{\alpha}, F_{-\alpha}}\),
and multiplying both sides by \(\frac{1}{N_{F_{-\alpha}} N_{F_\alpha}}\). The result is an expression that shows that
\(\bar{d}_{F_{-\alpha}, F_{\alpha}}/N_{F_\alpha} = \bar{d}_{F_{\alpha}, F_{-\alpha}} / N_{F_{-\alpha}}\). In words,
this new expression equates (1) the proportion of \(F_\alpha\) that the average person in \(F_{-\alpha}\) is connected to;
and (2) the proportion of \({F_{-\alpha}}\) that the average person in \(F_{\alpha}\) is connected to.
Finally, we multiply the new identity by \(N_F\) to help compare countries of different sizes.

Note that this rescaling does not affect whether or not the confidence intervals
for the IC checks includes 0; instead, it controls for the relative size of
group \(\alpha\). It makes internal consistency checks across different groups easier to compare with one another.

Using the example of the conversational contact reports, the final discrepancy measure is defined to be

\begin{equation}
\begin{aligned}
\Delta_\alpha^{\text{cc}} = K \left[ \widehat{d}_{F_{-\alpha},F_\alpha} - \widehat{d}_{F_{\alpha}, F_{-\alpha}} \right].
\end{aligned}
\label{eq:ic-bigdelta}
\end{equation}

\noindent Eq. \ref{eq:ic-bigdelta} can be computed for each bootstrap resample; the
distribution of \(\Delta_\alpha^{\text{cc}}\) across bootstrap resamples is then
an estimate for the sampling distribution of the discrepancy measure.

\hypertarget{sec:sensitivity-analysis}{%
\section{Sensitivity framework}\label{sec:sensitivity-analysis}}

In this Appendix, we describe a framework that can be used to
assess the sensitivity of the estimated number of people who use
the internet to the various conditions that the results in
Appendix \ref{sec:estimator} rely upon.

In order to develop the sensitivity framework, we adapt previous
work on network scale-up and other network reporting methods
(Feehan 2015; Feehan and Salganik 2016a).
We start by introducing three quantities, called \emph{adjustment factors}:

\begin{equation}
\eta_{H} = \frac{\substack{\text{avg \# reported connections from} \\ \text{F to H that actually
lead to H}}}{\substack{\text{avg \# reported connections} \\ \text{from F to H}}}
= \frac{y^+_{F,H}}{y_{F,H}},
\label{eq:defn-eta-h}
\end{equation}

and

\begin{equation}
\eta_{F} = \frac{\substack{\text{avg \# reported connections from} \\ \text{F to F that actually
lead to F}}}{\substack{\text{avg \# reported connections} \\ \text{from F to F}}}
= \frac{y^+_{F,F}}{y_{F,F}},
\label{eq:defn-eta-f}
\end{equation}

and

\begin{equation}
\nu = \frac{\text{avg \# in-reports to H from F}}{\text{avg \# in-reports to F from F}}
= \frac{\bar{v}_{H,F}}{\bar{v}_{F,F}}.
\label{eq:defn-nu}
\end{equation}

Each of these new parameters is equal to 1 under ideal conditions, when
the requirements of the results in Appendix \ref{sec:estimator} are satisfied.
In general, \(\nu\) can take on any value from \(0\) to \(\infty\),
while \(\eta_F\) and \(\eta_H\) can take on any value from \(0\) to \(1\).

The first sensitivity result reveals how estimated numbers of internet users
will be affected if one or more of the three adjustment factors is not equal
to 1.

~

\begin{Result}
\label{res:sens-yhat-f-z} Suppose that the sampling conditions for
Result \ref{res:yhat-f-z} hold, but that the reporting and network
structure conditions do not. That is, suppose we have a sample \(s\)
taken from the frame population using a probability sampling design.
Call the expansion weights given by the sampling design \(w_i\) for each
\(i \in s\). Further, suppose that for each \(i \in s\), we obtain
information from a simple random subsample \(s_{i}\) of \(r_i\) out of
the \(d_i\) people in \(i\)'s personal network.

Now suppose that \(\widehat{y}_{F,H}\) is consistent and unbiased for
\(y_{F,H}\) and that \(\widehat{\bar{y}}_{F,F}\) is consistent and
unbiased for \(\bar{y}_{F,F}\), but that \(\eta_{F,H} \neq 1\),
\(\eta_{F,F} \neq 1\), and \(\nu \neq 1\); that is, assume that the
remaining conditions in Result \ref{res:nh} do not hold. Then the
estimator

\begin{equation}
\widehat{N}_H = \frac{\widehat{y}_{F,H}}{\widehat{\bar{y}}_{F,F}}
\end{equation}

is consistent and unbiased for \((\frac{\eta_F}{\eta_H} \nu ) N_H\).
\end{Result}

\begin{proof}
The proof follows along the lines of Feehan and Salganik (2016a).
Briefly,

\begin{equation}
\begin{aligned}
\widehat{N}_H &= \frac{\widehat{y}_{F,H}}{\widehat{\bar{y}}_{F,F}}
\rightsquigarrow \frac{y_{F,H}}{\bar{y}_{F,F}}\\
\end{aligned}
\end{equation}

by the sampling conditions. Next, we wish to use the adjustment factors
to relate the estimand to \(N_H\):

\begin{equation}
\begin{aligned}
\frac{y_{F,H}}{\bar{y}_{F,F}} 
&= \frac{\eta_F}{\eta_H}~ \frac{y^+_{F,H}}{\bar{y}^+_{F,F}}\\
&= \frac{\eta_F}{\eta_H}~ \frac{v_{H,F}}{\bar{v}_{F,F}}\\
&= \frac{\eta_F}{\eta_H}~ \frac{\bar{v}_{H,F}}{\bar{v}_{F,F}}~N_H\\
&= \frac{\eta_F}{\eta_H}~\nu~N_H.
\end{aligned}
\end{equation}

Thus, we conclude that

\begin{equation}
\widehat{N}_H \rightsquigarrow \frac{\eta_F}{\eta_H}\nu N_H.
\end{equation}
\end{proof}

\begin{Corollary}
\label{res:sens-yhat-f-z-bias} Under the conditions listed in Result
\ref{res:sens-yhat-f-z},

\begin{equation}
\text{Bias}[\widehat{N}_H] 
=  \mathbb{E} [\widehat{N}_H] - N_H
= N_H (\frac{\eta_F}{\eta_H}\nu - 1).
\end{equation}
\end{Corollary}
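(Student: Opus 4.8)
The plan is to read off the bias formula directly from Result \ref{res:sens-yhat-f-z}, which has already done all of the substantive work by identifying the estimand that $\widehat{N}_H$ targets under the relaxed reporting and network-structure conditions. That result establishes that $\widehat{N}_H$ is consistent and (essentially) unbiased for the rescaled quantity $(\frac{\eta_F}{\eta_H}\nu)N_H$. The corollary therefore reduces to inserting the associated expectation into the definition of bias and rearranging; no new probabilistic or design-based argument is needed.

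Concretely, I would proceed in three short steps. First, I would record that unbiasedness for the rescaled estimand means $\mathbb{E}[\widehat{N}_H] = (\frac{\eta_F}{\eta_H}\nu)N_H$, which is precisely the conclusion of Result \ref{res:sens-yhat-f-z}. Second, I would expand the definition $\text{Bias}[\widehat{N}_H] = \mathbb{E}[\widehat{N}_H] - N_H$ and substitute the expectation from the first step to obtain $(\frac{\eta_F}{\eta_H}\nu)N_H - N_H$. Third, I would factor out $N_H$ to reach the claimed expression $N_H(\frac{\eta_F}{\eta_H}\nu - 1)$.

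There is essentially no obstacle here: the computation is a one-line algebraic rearrangement once Result \ref{res:sens-yhat-f-z} is in hand. The only point worth flagging is the distinction between \emph{unbiased} and \emph{essentially unbiased}. Since $\widehat{N}_H$ is a compound ratio estimator, it is strictly only essentially unbiased (indicated by the $\rightsquigarrow$ relation used in the proof of Result \ref{res:sens-yhat-f-z}), so the equality $\mathbb{E}[\widehat{N}_H] = (\frac{\eta_F}{\eta_H}\nu)N_H$, and hence the resulting bias formula, should be read in that same sense, i.e., up to the negligible ratio-estimator bias discussed after Result \ref{res:compound-ratio}. With that caveat understood, the stated identity follows immediately.
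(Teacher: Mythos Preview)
Your proposal is correct and matches the paper's approach: the paper states the corollary without an explicit proof, treating it as an immediate consequence of Result~\ref{res:sens-yhat-f-z}, which is exactly the substitution-and-factor argument you describe. Your remark about reading the equality in the ``essentially unbiased'' sense is a reasonable clarification that the paper leaves implicit.
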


Now we show how problems with the sampling weights can affect estimates;
this will be helpful in understanding what impact non simple random
subsampling of detailed alters would have.

First, we must define \emph{imperfect sampling weights}. We follow Feehan and Salganik (2016a)
and repeat the definition here for convenience:

\textbf{Imperfect sampling weights.} Suppose a researcher obtains a probability sample
\(s\) from the frame population \(F\). Let \(I_i\) be the random variable that
assumes the value 1 when unit \(i \in F\) is included in the sample \(s\),
and 0 otherwise. Let \(\pi_i =  \mathbb{E} [I_i]\) be the true probability of inclusion
for unit \(i \in F\), and let \(w_i = \frac{1}{\pi_i}\) be the corresponding
design weight for unit \(i\). We say that researchers have \emph{imperfect sampling weights}
when researchers use imperfect estimates of the inclusion probabilities \(\pi_i^\prime\)
and the corresponding design weights \(w_i^\prime = \frac{1}{\pi_i^\prime}\).
Note that we assume that both the true and the imperfect weights satisfy
\(\pi_i > 0\) and \(\pi_i^\prime > 0\) for all \(i\).

~\\
\hspace*{0.333em}

\begin{Result}
\label{res:imperfect-w} Suppose researchers have obtained a probability
sample \(s\), but that they have imperfect sampling weights. Call the
imperfect sampling weights \(w_i^\prime = \frac{1}{\pi_i^\prime}\), call
the true weights \(w_i = \frac{1}{\pi_i}\), and define
\(\epsilon_i = \frac{w_i^\prime}{w_i} = \frac{\pi_i}{\pi_i^\prime}\).
Then

\begin{equation}
\text{Bias}[\widehat{y}^\prime_{F,Z}]
= N_F \left[ \bar{y}_{F,Z} (\bar{\epsilon} - 1) + \text{cov}_F(y_{i,Z}, \epsilon_i)\right],
\end{equation}

where \(\bar{\epsilon} = \frac{1}{N_F} \sum_{i \in F} \epsilon_i\) and
\(\text{cov}_F(\cdot, \cdot)\) is the finite population unit covariance
in the frame population \(F\).
\end{Result}

\begin{proof}
See Result D.2 in Feehan and Salganik (2016a).
\end{proof}

Result \ref{res:imperfect-w} will be useful to us because we can
use it to understand situations where respondents' reports about the
detailed alters are different from simple random sampling.
In order to isolate the impact of such a difference, we assume in
Result \ref{res:imperfect-w} that the expansion weights for
respondent inclusion are accurate.

We also state the following fact, which will be useful in the
subsequent derivation.
~

\begin{Fact}
\label{res:sumprod-cov} \begin{equation}
\sum_{i \in A} a_i b_i = N_A[\bar{a}\bar{b} + \text{cov}_A(a_i,b_i)]
\end{equation}
\end{Fact}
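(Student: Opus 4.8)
The plan is to prove this purely algebraically by unwinding the definition of the finite population covariance. In the design-based setting of this paper, the finite population unit covariance over a set $A$ is
\begin{equation}
\text{cov}_A(a_i, b_i) = \frac{1}{N_A} \sum_{i \in A} (a_i - \bar{a})(b_i - \bar{b}),
\end{equation}
where $\bar{a} = \frac{1}{N_A}\sum_{i \in A} a_i$ and $\bar{b} = \frac{1}{N_A}\sum_{i \in A} b_i$ are the finite population means. So the first thing I would do is fix this as the operative definition (rather than the $1/(N_A-1)$ sample version), since the $1/N_A$ normalization is what makes the identity hold exactly.

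Next I would expand the product inside the covariance, writing $(a_i - \bar{a})(b_i - \bar{b}) = a_i b_i - a_i \bar{b} - \bar{a} b_i + \bar{a}\bar{b}$, then sum over $i \in A$, pulling the constants $\bar a$ and $\bar b$ outside each sum. Using $\sum_{i \in A} a_i = N_A \bar a$ and $\sum_{i \in A} b_i = N_A \bar b$, each of the two cross terms collapses to $N_A \bar a \bar b$, and the last constant term contributes $N_A \bar a \bar b$, so those three terms combine to $-N_A \bar a \bar b$. This gives $N_A\,\text{cov}_A(a_i,b_i) = \sum_{i \in A} a_i b_i - N_A \bar a \bar b$, and rearranging yields the claimed identity $\sum_{i \in A} a_i b_i = N_A[\bar a \bar b + \text{cov}_A(a_i,b_i)]$.

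There is no real obstacle here: this is just the standard decomposition of a sum of products into a product-of-means term plus a covariance term, and the only point requiring care is the normalization convention used in the definition of $\text{cov}_A$. I would therefore state that definition explicitly at the outset so that the bookkeeping in the expansion is unambiguous, and then the result follows in a couple of lines.
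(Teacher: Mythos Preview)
Your proposal is correct. The paper does not actually supply a proof of this Fact; it is simply stated as a useful identity and then applied, so there is no argument in the paper to compare against. Your expansion of the finite-population covariance with the $1/N_A$ normalization is the standard derivation and yields the identity exactly as claimed; your remark that the $1/N_A$ (rather than $1/(N_A-1)$) convention is what makes the equality exact is the only subtlety worth flagging, and you have handled it.
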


\begin{Result}
\label{res:imp-detailed-alters} Suppose that respondents do not report
about the detailed alters by picking \(r_i\) out of \(d_i\) of them
uniformly at random, so that the estimator for \(\widehat{y}_{F,Z}\) in
Result \ref{res:yhat-f-z} uses imperfect weights
\(l_{ij}^\prime = \frac{d_i}{r_i}\) for the final-stage subsampling of
detailed alters, while the true weight for each of respondent \(i\)'s
detailed alters \(j\) is given by \(l_{ij}\). Let
\(\epsilon_i = \frac{l_i^\prime}{l_i}\). Suppose also that the expansion
weights \(w_i\) for the inclusion of respondents in the sample are
accurate. Then the bias of \(\widehat{y}_{F,Z}^\prime\) is given by

\begin{equation}
\text{Bias}[\widehat{y}_{F,Z}^\prime] =
\sum_{i \in F} \sum_{j \sim i} z_{ij} (\epsilon_{ij} - 1).
\end{equation}
\end{Result}

\begin{proof}
\begin{equation}
\begin{aligned}
 \mathbb{E} [\widehat{y}_{F,Z}^\prime] &=
 \mathbb{E} \left[\sum_{i \in s} w_i \times  \mathbb{E} _i [ \sum_{j \in s_i} l_{ij}^\prime z_{ij} | s] \right]\\
&= \sum_{i \in F} w_i  \mathbb{E} [I_i] \times \sum_{j \sim i}  \mathbb{E} [I_{ij} | s] l_{ij}^\prime z_{ij} \\
&= \sum_{i \in F} \sum_{j \sim i} \frac{l_{ij}^\prime}{l_{ij}} z_{ij} \\
&= \sum_{i \in F} \sum_{j \sim i } \epsilon_{ij} z_{ij},
\end{aligned}
\end{equation} where \(j \sim i\) indexes the people \(j\) that are
reported in respondent \(i\)'s network. Thus, the bias is
\begin{equation}
\begin{aligned}
\text{Bias}(\widehat{y}_{F,Z}^\prime) &=  \mathbb{E} [\widehat{y}_{F,Z}^\prime] - y_{F,Z}\\
&= \sum_{i \in F} \sum_{j \sim i} \epsilon_{ij} z_{ij} - \sum_{i \in F} \sum_{j \sim i} z_{ij}\\
&= \sum_{i \in F} \sum_{j \sim i} z_{ij} (\epsilon_{ij} - 1).
\end{aligned}
\end{equation}
\end{proof}

To understand Result \ref{res:imp-detailed-alters} better,
we manipulate the expression for \(\text{Bias}[\widehat{y}_{F,Z}^\prime]\) with the
aim of producing a more interpretable expression:

\begin{equation}
\begin{aligned}
\text{Bias}(\widehat{y}_{F,Z}^\prime) 
&= \sum_{i \in F} \sum_{j \sim i} z_{ij} (\epsilon_{ij} - 1)\\
&= \sum_{i \in F} y_i [\bar{z}_i (\bar{\epsilon}_i - 1) + \text{cov}_{j \sim i}(z_{ij}, \epsilon_{ij}-1)]\\
&= \sum_{i \in F} y_i \bar{z}_i \bar{\epsilon}_i 
   - \sum_{i \in F} y_i \bar{z}_i
   + \sum_{i \in F} y_i \sigma_i\\ 
&= \sum_{i \in F} z_i \bar{\epsilon}_i 
   + \sum_{i \in F} y_i \sigma_i
   - \sum_{i \in F} z_i,
\end{aligned}
\end{equation}

where \(j \sim i\) indexes the people \(j\) that are reported in respondent \(i\)'s network;
\(y_i = y_{i,U} = \sum_{j \sim i} 1\) is the total number of people \(i\) would report
about if there was no subsampling;
\(z_i = \sum_{i \sim j} z_{ij}\) is the total number of people \(i\) would report as members of \(Z\)
if there was no subsampling;
\(\bar{z}_i = y_{i}^{-1} \sum_{i \sim j} z_{ij}\) is the average \(z_{ij}\)
among respondent \(i\)'s reported network members;
\(\bar{z}=N_F^{-1}\sum_{i \in F} \bar{z}_i\) is the average \(\bar{z}_i\) across people in the frame;
\(\bar{\epsilon}_i = y_{i}^{-1} \sum_{i \sim j} \epsilon_{ij}\) is the average \(\epsilon_{ij}\)
among respondent \(i\)'s reported network members;
\(\bar{\epsilon}=N_F^{-1}\sum_{i \in F} \bar{\epsilon}_i\) is the average \(\bar{\epsilon}_i\) across people in the frame;
and \(\sigma_i = \text{cov}_{i \sim j}(z_{ij}, \epsilon_{ij})\) is the covariance between the \(\epsilon_{ij}\) and \(z_{ij}\) among respondent \(i\)'s reported network members.

Finally, we use Fact \ref{res:sumprod-cov} twice--once within respondent and
once between respondents:

\begin{equation}
\begin{aligned}
\sum_{i \in F} z_i \bar{\epsilon}_i 
   &+ \sum_{i \in F} y_i \sigma_i
   - \sum_{i \in F} z_i 
   \\
&= N_F[\bar{z}\bar{\epsilon} + \text{cov}_F(z_i, \bar{\epsilon}_i)]
   + N_F[\bar{y}_{F,U} \bar{\sigma} + \text{cov}_F(y_i, \sigma_i)]
   - y_{F,Z}\\
&= 
y_{F,Z} \left[
\underbrace{
(\bar{\epsilon}-1)}_{\substack{\text{aggregate} \\ \text{error} \\ \text{in weights}}}
+ 
\underbrace{
\frac{\bar{\sigma}~\bar{y}_{F,U} + \text{cov}_F(y_i, \sigma_i) }{\bar{y}_{F,Z}}
}_{\substack{\text{relationship between} \\ \text{personal network} \\ \text{size and} \\ \text{weight errors}}}f 
+
\underbrace{
\frac{\text{cov}_F(z_i, \bar{\epsilon}_i)}{\bar{y}_{F,Z}}\
}_{\substack{\text{relationship} \\ \text{between} \\ \text{weight errors} \\ \text{and alters'} \\ \text{internet use}}}
\right]\\
\end{aligned}
\label{eq:detailed-alter-decomp}
\end{equation}

Thus, Eq. \ref{eq:detailed-alter-decomp} shows that when respondents do not choose
detailed alters uniformly at random, the resulting bias can be decomposed into
three terms: one term related to aggregate errors in the weights; one term
that captures the relationship between personal network size and weight errors;
and one term that captures the relationship between weight errors and alters' internet use.

\hypertarget{sec:survey-methods}{%
\section{Additional details about survey design}\label{sec:survey-methods}}

In this appendix, we provide additional details about the design of our survey.
We hope that these details will help researchers who wish to use our design as a
starting point for future research.

\textbf{Weighting and post-stratification}

The estimator we develop in the main text is based on having a probability sample of the frame population.
We obtained a probability sample using Facebook's internal survey sampling mechanism.
However, like any real-world sampling approach, the actual set of people we interview is the result of a two-step process: first, we randomly chose people to be included in the sample;
and, second, some of those people agreed to participate in the study.
If the people who agreed to participate were systematically different from people who did not, that could affect
the accuracy of our inferences (just like any survey). Therefore, we adjust the sampling weights using post-stratification to improve the representativeness of our sample\footnote{Note that our goal is not to use post-stratification to account for differences between the people who use Facebook and other people; instead, the goal is to ensure that our sample is as representative as possible of people who use Facebook.}.
Post-stratification is commonly used in sample surveys with the goal of helping to reduce the bias and variance of
estimates;
readers who are interested in learning more about post-stratification can
consult standard survey research texts (e.g., Särndal and Lundström 2005; Lumley 2011; Valliant et al. 2013).
We provide a conceptual overview of how we used post-stratification here.

Our \emph{design weights} come from the sampling design, which we treat as a simple random sample without replacement. Under this sampling design, the probability of inclusion, i.e., the probability that a given person \(i\) who uses Facebook is included in the sample, is

\[
\pi^0_i = \frac{n}{N_F},\\
\]

where \(n\) is the sample size in \(i\)'s country and \(N_F\) is the size of the frame population, i.e., the number of active Facebook users in \(i\)'s country.
The design weight is the reciprocal of the probability of inclusion, \(w^0_i = \frac{1}{\pi^0_i}\).

In any particular sample \(s \subset F\), the set of respondents may be different from the
overall population of Facebook users. To account for this fact,
post-stratification adjusts these design weights by using known counts of active Facebook users by age and sex.
For a given survey respondent \(i\) who is in age-sex group \(\alpha\), the post-stratified weight is given by

\[
w_i = w^0_i~K_{\alpha},
\]
where \(K_{\alpha}\) is a factor given by

\[
K_\alpha = \frac{N_{F_\alpha}}{\sum_{i \in s_\alpha} w^0_i}.
\]

\(\sum_{i \in s_\alpha} w_i^0\) is the sum of the design weights among
respondents who are in age-sex group \(\alpha\),
and \(N_{F_\alpha}\) is the number of people who actively use Facebook in age-sex group \(\alpha\)
(or, more generally, the size of the frame population in group \(\alpha\)).
The intuition is that the denominator is the
design-weighted estimate of the number of frame population members in group \(\alpha\),
while \(N_{F_\alpha}\) is the actual size of group \(\alpha\), which is known.
\(K_\alpha\) thus adjusts the design weights so that they agree with the
known total sizes.

Recall that, in our study, we estimate sampling variation using the rescaled bootstrap
(Rao and Wu 1988; Rao et al. 1992).
We combine post-stratification and the bootstrap by post-stratifying
the set of weights that results from each bootstrap resample by age and sex\footnote{Specifically, used the \texttt{calibrate} function in the \texttt{survey} R package (Lumley 2004, 2011).}.
Thus, after post-stratification, for each bootstrap resample the sum of survey
weights will conform to known frame population totals.
For example, the sum of weights among female respondents in Brazil will equal the number of
females who are active on Facebook in Brazil.

In our study, post-stratifying the weights did not make a big difference in our estimates.
To illustrate this fact, Figure \ref{fig:estimates-nocalib} compares estimated internet adoption using the design
weights to estimated internet adoption using the post-stratified weights; the estimates are
very similar to one another.

However, in other studies whose goal is obtain a sample from an online frame
population, we can envision situations in which post-stratification might make a
bigger difference.
Since we partnered with Facebook directly, we are able to obtain an actual
probability sample of frame population members to invite to our survey.
In general, this may not always be possible and so
researchers may wish to try to recruit survey participants
using incomplete lists of mobile phone numbers, online ads, or other sources from which a probability sample is not
possible. In all cases, we recommend that researchers (i) critically assess the extent
to which their sampling mechanism produced a representative sample of the online frame population; and
(ii) consider using post-stratification or other calibration approaches to adjust systematic
differences between the sample and the frame population.

\begin{figure}[p]
    \centering
        \includegraphics[]{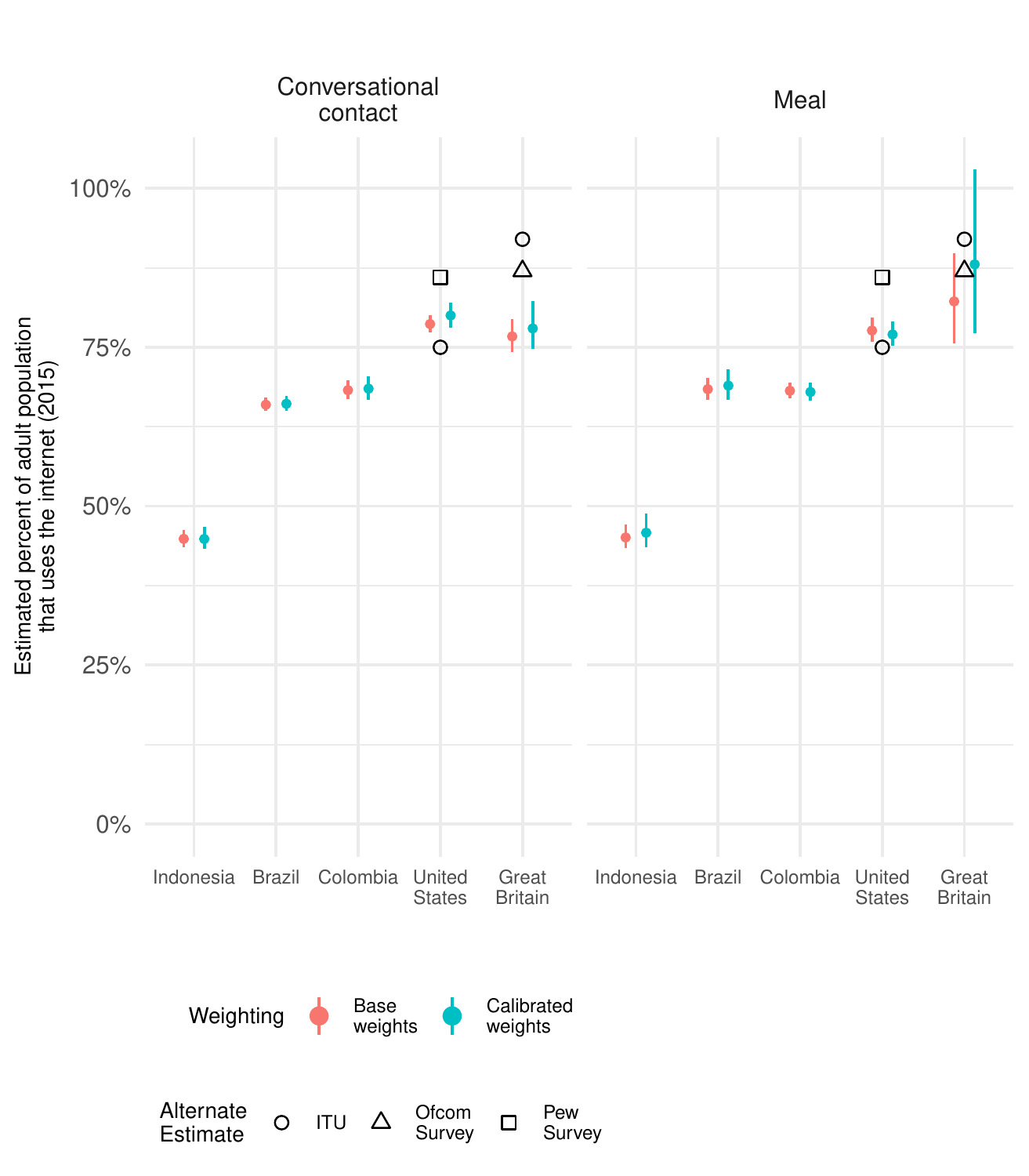}
        \caption{
        Estimated percentage of 2015 adult population that uses the internet, by country and for each of the two networks, with and without adjusting weights using post-stratification. The `base weights' are unadjusted and the `calibrated weights' are adjusted. Estimates are very similar to the ones shown in Figure \ref{fig:estimates}; the main difference is that the meal estimate for Great Britain is slightly lower in this figure.  }
        \label{fig:estimates-nocalib}
\end{figure}

\textbf{Sampling designs for future studies}

As we discuss above, we analyze our results as a simple random sample without
replacement, and our frame population---i.e., the group of people who were
eligible to be included as respondents in the survey---was people who actively
use Facebook in each of the five
countries we study. It was not necessary to use any stratification,
oversampling, or other features of more complex sample designs for our study.

Future researchers who wish to adopt our methodology may consider using
more complex sampling designs, including stratification, oversampling, and so
forth.
Since our results are derived using the design-based sampling framework, our
estimators and variance estimation approach will generalize naturally to complex
samples; a text on design-based sampling will have more details (e.g., Sarndal et al. 2003).

However, we note that modified sampling designs may require some attention
to the conditions that the estimator relies upon. In some cases, it may be appropriate to modify
the design used in our study to be more appropriate to the setting and the quantity of interest
(see also Appendix \ref{sec:simple-model}).
As an example, if researchers
wished to produce estimates of internet adoption in rural and urban areas
separately, then it would make sense to (i) stratify the sample by rural and
urban areas; and (ii) consider changing the survey question to focus on
respondents' network members who live in the same area they do.
That way, rural respondents would report about rural internet use while urban
respondents reported about urban internet use.
In this case, it would also be desireable for post-stratification to be conducted for
age/sex/urbanicity groups, rather than just age/sex groups.
Of course, other quantities of interest may require different changes to the exact design
of the survey; in the same way that no single design is appropriate for all household surveys,
it is also true that no single design will be appropriate for all online network reporting surveys.

\textbf{Survey instrument}

\begin{figure}[p]
    \centering
        \subfloat[]{\includegraphics[]{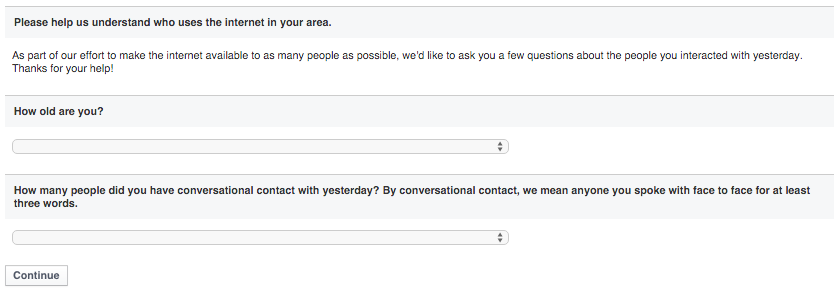}}\\
        \subfloat[]{\includegraphics[]{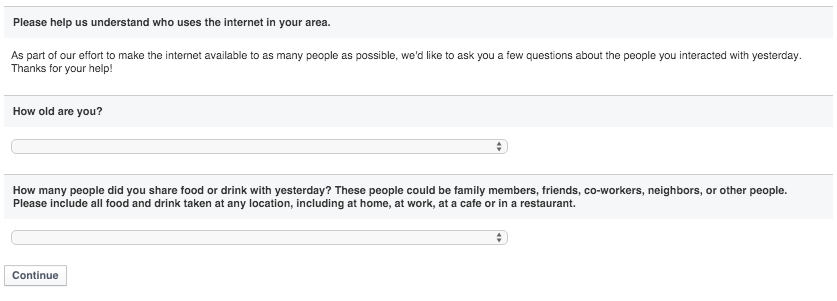}}
        \caption{
        First page of questions shown to survey respondents in (a) the conversational contact survey; and, (b) the meal survey.
        }
        \label{fig:survey-screenshot-0}
\end{figure}

\begin{figure}[p]
    \centering
    \subfloat[]{\includegraphics[width=5in]{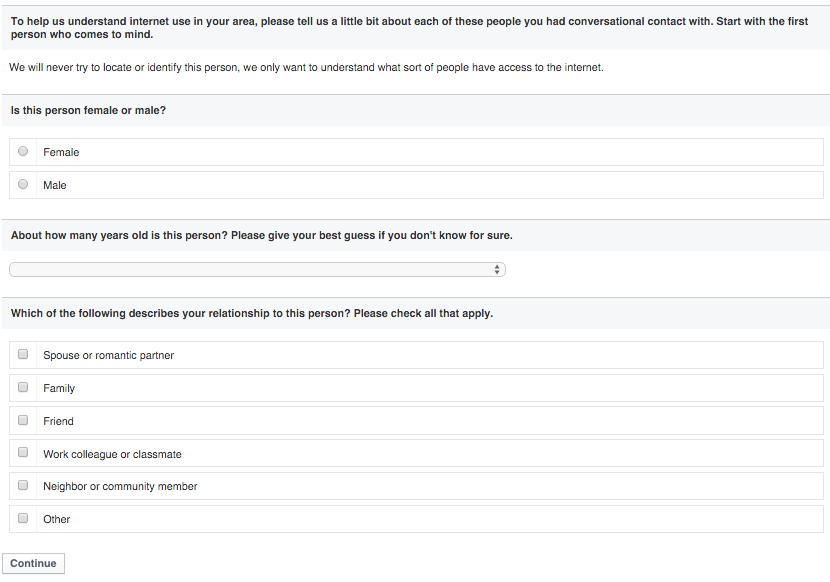}}\\
    \subfloat[]{\includegraphics[width=5in]{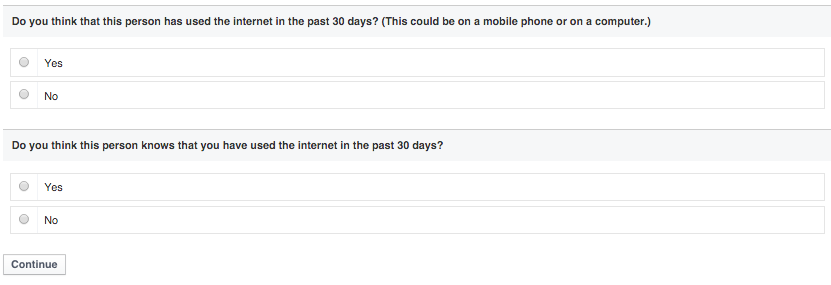}}\\
    \subfloat[]{\includegraphics[width=5in]{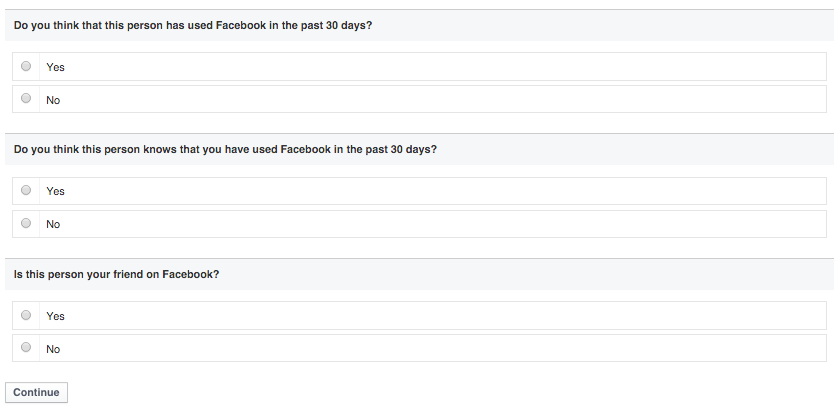}}
    \caption{Survey instrument used to ask about each detailed alter: panels (a), (b), and (c) show the first, second, and third screens asked about each of up to three detailed alters.}
    \label{fig:survey-screenshot-1}
\end{figure}

Figure \ref{fig:survey-screenshot-0} and Figure \ref{fig:survey-screenshot-1} show the
instruments that were used in our survey. (These instruments were translated
from English into the dominant local language for each country. Translated
instruments were back-translated into English as a quality check, as is standard
in survey research.)
Nothing about the survey instrument was particularly complex, and we expect that
delivering a similar survey would be possible using most existing systems.

Although we did not precisely measure the amount of time taken to complete each
survey, log data suggest that the average amount of time to complete the survey
was approximately 4.3 minutes for the conversational contact instrument and
approximately 3.8 minutes for the meal instrument.

\hypertarget{sec:generalized}{%
\section{Generalized estimates}\label{sec:generalized}}

Our approach relies upon survey respondents to be able to report whether or not members of their
personal networks used the internet in the last 30 days.
In reality, respondents may not be perfectly aware of their network members' internet use.
We see two ways to address this issue.
First, our sensitivity framework (Appendix \ref{sec:sensitivity-analysis}) can be used to assess how big an impact
errors in reporting will have on size estimates.
Second, for some groups, it may be possible to try to estimate the level of awareness of internet use
from the respondents themselves (Feehan and Salganik (2016a) has an in-depth analysis of this idea applied to the network scale-up estimator).
We now describe how we explored this second approach in more detail.

In our survey, we explored estimating visibility directly from the respondents by asking, for each detailed alter,
whether or not the detailed alter was aware of the respondent's internet use
(see the survey instrument in Figure \ref{fig:survey-screenshot-1}).
The idea is that the average extent to which detailed alters are aware of the
respondents' internet use can be used to approximate awareness of internet use in general.

Mathematically, an alternate approach to estimating the number of internet users in a given country is given by

\begin{equation}
\widehat{N}_H^{\text{gen}} = \frac{\sum_{i \in s} w_i \frac{d_i}{r_i} o_i}{\sum_{i \in s} w_i \frac{d_i}{r_i} z_i},
\label{eqn:gen-estimator}
\end{equation}

where the new term in the denominator, \(z_i\), is the total number of detailed alters reported by \(i\) who
are both on Facebook and reported to be aware of \(i\)'s internet use.
To understand Eq. \ref{eqn:gen-estimator} better, it is helpful to compare it to the estimator
used in the main text (Eq. \ref{eqn:estimator-any-weights}).
The estimator in Equation \ref{eqn:estimator-any-weights} has in its denominator
\(f_i\), which is the number of \(i\)'s detailed alters who are in the frame population.
We can re-write \(f_i\) as a sum of characteristics of each detailed alter \(f_i = \sum_{j \sim i} f_{ij}\), where \(f_{ij}\) is an indicator variable for whether or not \(i\)'s \(j\)th detailed
alter is reported to be on the frame population.

The generalized estimator in Equation \ref{eqn:gen-estimator} replaces \(f_i\) with another quantity, \(z_i\),
which is the total number of \(i\)'s detailed alters who are both in the frame population and who are reported
to be aware of \(i\)'s own internet use. Thus, we can write
\(z_i = \sum_{j \sim i} f_{ij} z_{ij}\),
where \(z_{ij}\) is an indicator for whether or not \(i\)'s \(j\)th detailed alter is reported to be aware of \(i\)'s internet use.
The intuition is that \(z_i\) will be lower than \(f_i\) when many of respondents' own network members are unaware
of the respondents' internet use, suggesting that general awareness about others' internet use is low.
Thus, as Feehan and Salganik (2016a) explains in more detail, the main idea is to use \(i\)'s perceived visibility to \(i\)'s network members as a way to approximate the visibility, rather than the degree, of people on the frame population.

In practice, we found that generalized estimates were almost indistinguishable from the
basic estimates with no adjustments; to illustrate this finding,
Figure \ref{fig:estimates-withgen} compares the generalized estimates to the basic estimates.
Thus, we focus on basic estimates in the main text.
However, other study designs, tie definitions, or quantities of interest may benefit more from using a generalized estimator, and we recommend that future studies continue to explore this possibility.

\begin{figure}[p]
    \centering
        \includegraphics[]{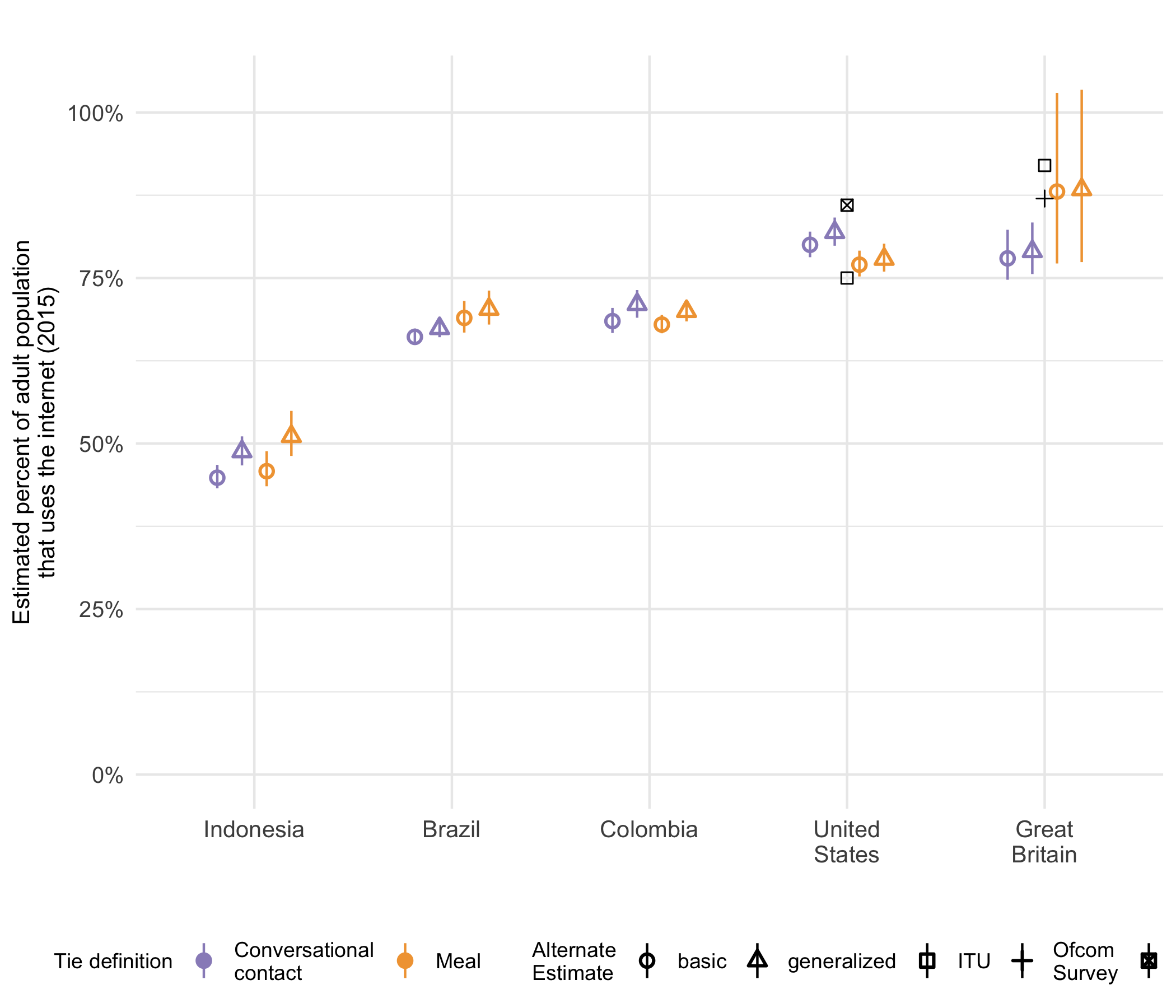}
        \caption{
        Comparison between basic and generalized estimates. Generalized estimates try to directly estimate visibility from the respondents, while basic estimates use the network size of survey respondents as an approximation of visibility. The figure shows that the two approaches produce very similar estimates; thus, we focus on basic estimates in the main text.
        }
        \label{fig:estimates-withgen}
\end{figure}

\hypertarget{sec:simple-model}{%
\section{A simple model to motivate the visibility estimator}\label{sec:simple-model}}

In the main text, the estimator we introduce is based upon approximating the visibility
of internet users, \(\bar{v}_{H,F}\) with the condition shown in Eq. \ref{eq:equal-tie-fbnofb},
which we repeat here for convenience:

\begin{equation}
\bar{d}_{H,F} = \bar{d}_{F,F}.
\label{eq:ap2-equal-tie-fbnofb}
\end{equation}

Eq. \ref{eq:ap2-equal-tie-fbnofb} says that the average number of connections
from an internet user to someone on Facebook equals the average number of connections
from someone on Facebook to someone else on Facebook.
In this Appendix, we show how a simple model can be used to derive the
relationship in Eq. \ref{eq:ap2-equal-tie-fbnofb}.
We note that this model is sufficient, but not
necessary: Eq. \ref{eq:ap2-equal-tie-fbnofb} is ultimately an empirical condition
that can hold even if the process that generated the underlying network is quite different
from this model; the goal of this model is to illustrate
one simple process through which Eq. \ref{eq:ap2-equal-tie-fbnofb} would be satisfied.

For the purposes of this Appendix, we will use the meal network as our
motivating example. We assume that reports are accurate, and we focus on
connections among people who use the internet, i.e.~members of group \(H\) (including
people who use Facebook and people who don't).
We ignore sampling and focus on population-level quantities, since the condition in Eq.
\ref{eq:ap2-equal-tie-fbnofb} is a population-level relationship.
And, in this appendix only, we will take
expectations with respect to our model, and not with respect to a sampling design (unlike the
rest of the paper).

The goal of the model is to make precise a situation in which people who are on the internet do not pay attention
to whether or not another internet user is on Facebook when deciding to share a meal together.
We call this situation \emph{homogenous mixing}.
Formally, suppose that meals are shared by each pair of internet users, \(i, j \in H\) with probability
\(p\), and that the probability that any pair of internet users shares a meal is independent of everyone
else's meal sharing.
Then we have an Erdos-Renyi random network with parameters \(N_H\) and \(p\).
An internet user will be connected to each of the \(N_F-1\) other internet users
with probability \(p\), so the expected degree of an internet user \(i \in H\) will be
\( \mathbb{E} [d_{i,F}] = (N_F - 1) p\).

Using the fact that \( \mathbb{E} [d_{i,F}] = (N_F-1)p\), and the independence of the edges,
we can now derive expressions for both sides of the relationship
in Eq. \ref{eq:ap2-equal-tie-fbnofb}. Starting with \(\bar{d}_{H,F}\), we find

\begin{equation}
\begin{aligned}
 \mathbb{E} [\bar{d}_{H,F}] &= \frac{1}{N_H}  \mathbb{E} [d_{H,F}]\\
&= \frac{1}{N_H}  \mathbb{E} [\sum_{i \in H} d_{i,F}]\\
&= \frac{1}{N_H} \sum_{i \in H} p (N_F-1)\\
&= (N_F-1)~p.
\end{aligned}
\end{equation}

And, for \(\bar{d}_{F,F}\), we get

\begin{equation}
\begin{aligned}
 \mathbb{E} [\bar{d}_{F,F}] &= \frac{1}{N_F}  \mathbb{E} [d_{F,F}]\\
&= \frac{1}{N_F} \sum_{i \in F}  \mathbb{E} [d_{i,F}]\\
&= \frac{1}{N_F} \sum_{i \in F} (N_F-1)p\\
&= (N_F-1) p.
\end{aligned}
\end{equation}

Thus, under this model, \(\bar{d}_{H,F} = \bar{d}_{F,F}\).

\hypertarget{extensions-of-the-model}{%
\subsection*{Extensions of the model}\label{extensions-of-the-model}}
\addcontentsline{toc}{subsection}{Extensions of the model}

We will briefly discuss two ways that this simple model can be elaborated:
first, we examine a situation in which the population is a mixture of different subgroups, each with
different levels of internet adoption and Facebook usage; within each subgroup, homogenous mixing holds.
Second, we examine a situation in which mixing is not homogenous.
We see pursuing these and other elaborations of the simple model as a useful direction for future work.

Before we can extend our model, however, we must introduce two useful facts.
The first fact describes how the average number of connections between two groups, \(A\) and \(B\),
can be decomposed when the first group \(A\) can be partitioned into two subgroups.\\
\hspace*{0.333em}\\
\hspace*{0.333em}

\begin{Fact}
\label{res:partition-dbar} Suppose that group \(A \subset U\) can be
partitioned into two subsets \(A_1\) and \(A_2\). Then for any group
\(B \subset U\), \begin{equation}
\bar{d}_{A, B} = p_1 \bar{d}_{A_1, B} + p_2 \bar{d}_{A_2,B},
\end{equation} where \(p_1 = \frac{|A_1|}{|A|}\) and
\(p_2 = \frac{|A_2|}{|A|}\).
\end{Fact}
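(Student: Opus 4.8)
The plan is to unwind the definition of the average $\bar{d}_{A,B}$ as the total number of connections $d_{A,B}$ divided by the size of the \emph{source} group $|A|$, exploit the additivity of $d_{\cdot,B}$ over a partition of that source group, and then reassemble the pieces into a size-weighted average. This is the same normalization already used in Appendix \ref{sec:simple-model}, where $\bar{d}_{H,F} = \frac{1}{N_H} d_{H,F}$, so the convention is fixed by the paper's own usage.

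First I would recall from the notation list that $d_{A,B} = \sum_{i \in A} d_{i,B}$ is the total count of connections running between $A$ and $B$, so that the group average is $\bar{d}_{A,B} = d_{A,B}/|A|$. Since $A = A_1 \cup A_2$ is a partition, the two blocks are disjoint and exhaust $A$, so the index set of the defining sum splits cleanly: $d_{A,B} = \sum_{i \in A_1} d_{i,B} + \sum_{i \in A_2} d_{i,B} = d_{A_1,B} + d_{A_2,B}$. This additivity is the only structural input required.

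Next I would substitute $d_{A_1,B} = |A_1|\,\bar{d}_{A_1,B}$ and $d_{A_2,B} = |A_2|\,\bar{d}_{A_2,B}$, which are just the definitions of the block averages rearranged. Dividing the additive identity through by $|A|$ then gives $\bar{d}_{A,B} = \frac{|A_1|}{|A|}\,\bar{d}_{A_1,B} + \frac{|A_2|}{|A|}\,\bar{d}_{A_2,B}$, and identifying $p_1 = |A_1|/|A|$ and $p_2 = |A_2|/|A|$ delivers the stated formula.

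There is no genuine obstacle here: the content is simply that a mean over a set equals the size-weighted mean of the means over the blocks of any partition of that set. The one point that merits a moment's care is keeping straight that the normalizing denominator in $\bar{d}_{A,B}$ is the size of the source group $A$ (not of $B$, and not the number of edges), so that the mixing weights $p_1,p_2$ emerge as fractions of $|A|$ rather than of $|B|$; once that convention is pinned down, the decomposition is immediate.
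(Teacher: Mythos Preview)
Your proposal is correct and follows essentially the same route as the paper: split the defining sum $d_{A,B}=\sum_{i\in A} d_{i,B}$ over the partition, rewrite each block total as $|A_k|\,\bar{d}_{A_k,B}$, and divide by $|A|$. The only cosmetic difference is ordering; the paper divides by $|A|$ first and then substitutes, while you substitute first and then divide.
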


Fact \ref{res:partition-dbar} follows from some algebra:
~\\
\hspace*{0.333em}

\begin{equation}
\begin{aligned}
\bar{d}_{A,B} 
&= \frac{1}{|A|} \left[ d_{A_1,B} + d_{A_2,B} \right]\\
&= \frac{1}{|A|} \left[ |A_1| \bar{d}_{A_1,B} + |A_2| \bar{d}_{A_2,B} \right]\\
&= \frac{|A_1|}{|A|} \bar{d}_{A_1,B} + \frac{|A_2|}{|A|} \bar{d}_{A_2,B}. 
\end{aligned}
\end{equation}

~\\
The second fact describes how the average number of connections between two groups,
\(A\) and \(B\), can be decomposed when the second group \(B\) can be partitioned into two subgroups.
~\\
\hspace*{0.333em}

\begin{Fact}
\label{res:subset-dbar} Suppose that group \(B \subset U\) can be
partitioned into two subsets \(B_1\) and \(B_2\). Then for any group
\(A \subset U\), \begin{equation}
\bar{d}_{A,B} = \bar{d}_{A,B_1} + \bar{d}_{A,B_2}.
\end{equation}
\end{Fact}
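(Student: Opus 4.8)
The plan is to reduce everything to the basic additive definition $d_{A,B} = \sum_{i \in A} d_{i,B}$ together with the averaging convention $\bar{d}_{A,B} = d_{A,B}/N_A$, and then to exploit the fact that the two summands being compared share the \emph{same} normalizing factor $N_A$. This is the crucial structural contrast with Fact~\ref{res:partition-dbar}: there we partitioned the first index set $A$, so the sub-averages carried different denominators ($|A_1|$ and $|A_2|$) and the weights $p_1, p_2$ appeared; here we partition only the reported-about side $B$, the set $A$ over which we average is left untouched, and consequently no weights will arise and the decomposition is a plain sum.

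First I would observe that because $B_1$ and $B_2$ partition $B$ (disjointly, with union $B$), the number of connections from a fixed node $i$ to $B$ splits cleanly, since every edge from $i$ that lands in $B$ lands in exactly one of $B_1$ or $B_2$ and none is double counted:
\begin{equation}
d_{i,B} = d_{i,B_1} + d_{i,B_2}.
\end{equation}
Summing this identity over $i \in A$ yields $d_{A,B} = d_{A,B_1} + d_{A,B_2}$ directly from the definition of $d_{A,B}$. Dividing through by $N_A$, which is common to all three totals because $A$ is held fixed throughout, gives
\begin{equation}
\bar{d}_{A,B} = \frac{d_{A,B}}{N_A} = \frac{d_{A,B_1}}{N_A} + \frac{d_{A,B_2}}{N_A} = \bar{d}_{A,B_1} + \bar{d}_{A,B_2},
\end{equation}
which is the claimed identity.

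There is no genuine obstacle here; the only point requiring care is the bookkeeping that distinguishes this Fact from Fact~\ref{res:partition-dbar}. The single thing worth stating explicitly is that partitioning the reported-about set $B$ leaves the averaging denominator $N_A$ invariant, which is exactly why the result is an unweighted sum rather than a convex combination of the two sub-averages.
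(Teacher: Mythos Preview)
Your proof is correct and follows essentially the same approach as the paper: split the total $d_{A,B}$ into $d_{A,B_1}+d_{A,B_2}$ and divide by the common factor $|A|$. Your version is slightly more explicit (you state the node-level identity $d_{i,B}=d_{i,B_1}+d_{i,B_2}$ before summing), but the argument is the same.
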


~\\
\hspace*{0.333em}\\
Fact \ref{res:subset-dbar} also follows from some algebra:

\begin{equation}
\begin{aligned}
\bar{d}_{A,B}
&= \frac{1}{|A|}\left[ d_{A,B_1} + d_{A,B_2}\right]\\
&= \bar{d}_{A,B_1} + \bar{d}_{A,B_2}.
\end{aligned}
\end{equation}

~\\
\hspace*{0.333em}

\hypertarget{a-non-interacting-mixture-of-different-populations}{%
\subsubsection*{A non-interacting mixture of different populations}\label{a-non-interacting-mixture-of-different-populations}}
\addcontentsline{toc}{subsubsection}{A non-interacting mixture of different populations}

Suppose the population can be partitioned into a set of groups such
that
(i) the number of internet users is different in each group;
(ii) the proportion of internet users who use Facebook is
different in each group; and
(iii) there are no meals shared between the groups and, within each group, mixing is homogenous.
We shall show that, in the the aggregate population, the relationship,
\(\bar{d}_{H,F} = \bar{d}_{F,F}\) will still hold.
As a motivating example, we will consider the case in which
there are rural and urban areas, and both internet adoption and Facebook usage
are higher in urban areas.

Formally, call the number of Facebook users in urban and rural areas \(N_{F \cap C}\) and \(N_{F \cap R}\);
call the number of internet users in urban and rural areas \(N_{H \cap C}\) and \(N_{H \cap R}\);
and let the fraction of internet users that is on Facebook be \(p_C = \frac{N_{F \cap C}}{N_{H \cap C}}\)
for urban areas
and \(p_R = \frac{N_{F \cap R}}{N_{H \cap R}}\) for rural areas.
Suppose that people only share meals with others who are in the same area as them;
so, there are no meals shared between rural and urban areas.
Finally, suppose that within these areas, homogenous mixing holds so that the condition in
Eq. \ref{eq:ap2-equal-tie-fbnofb} is satisfied;
that is, suppose that
\(\bar{d}_{H \cap R, F \cap R} = \bar{d}_{F \cap R,F \cap R}\) and
\(\bar{d}_{H \cap C,F \cap C} = \bar{d}_{F \cap C,F \cap C}\), where \(F \cap C\) and \(F \cap R\) are urban and rural Facebook users and \(H \cap C\) and \(H \cap R\) are urban and rural internet users.

The setup is a mixture of two populations, one urban and one rural. Within each
population, the simple model studied above describes the networks.
We wish to show that in the the aggregate population,
the relationship, \(\bar{d}_{H,F} = \bar{d}_{F,F}\) holds.

Applying Fact \ref{res:partition-dbar} to \(\bar{d}_{H,F}\), we have

\begin{equation}
\begin{aligned}
\bar{d}_{H,F} 
&= p_C \bar{d}_{H \cap C, F} + p_R \bar{d}_{H \cap R, F}\\
&= p_C \bar{d}_{H \cap C, F \cap C} + p_R \bar{d}_{H \cap R, F \cap R}.
\label{eqn:intermed-1}
\end{aligned}
\end{equation}

Where the last step follows from Fact \ref{res:subset-dbar} together with the
assumption that there are no meals shared between rural and urban areas.
Next, we apply the fact that mixing is homogenous within each area, i.e., that
\(\bar{d}_{H \cap R, F \cap R} = \bar{d}_{F \cap R, F \cap R}\)
and \(\bar{d}_{H \cap C, F \cap C} = \bar{d}_{F \cap C, F \cap C}\).
Eq. \ref{eqn:intermed-1} can thus be simplified to

\begin{equation}
\begin{aligned}
p_C \bar{d}_{H \cap C, F \cap C} + p_R \bar{d}_{H \cap R, F \cap R}
&= p_C \bar{d}_{F \cap C, F \cap C} + p_R \bar{d}_{F \cap R, F \cap R}\\
\label{eqn:intermed-2}
\end{aligned}
\end{equation}

Finally,
note that Fact \ref{res:subset-dbar}, together with the
assumption that there are no meals shared between rural and urban areas,
suggests substituting \(\bar{d}_{F \cap R, F \cap R} = \bar{d}_{F \cap R, F}\)
and \(\bar{d}_{F \cap U, F \cap U} = \bar{d}_{F \cap U, F}\)
into Eq. \ref{eqn:intermed-1}. This substitution produces

\begin{equation}
\begin{aligned}
p_C \bar{d}_{F \cap C, F \cap C} + p_R \bar{d}_{F \cap R, F \cap R}
&= p_C \bar{d}_{F \cap C, F} + p_R \bar{d}_{F \cap R, F}\\
&= \bar{d}_{F, F}.
\label{eqn:intermed-3}
\end{aligned}
\end{equation}

Thus, we have shown that \(\bar{d}_{H,F} = \bar{d}_{F,F}\) in this mixture model.

The analysis above emphasizes the fact that, in this simple model, the key condition is that
\(\bar{d}_{H,F} = \bar{d}_{F,F}\). If Facebook and internet adoption rates are different in different
types of places, but within a given place, Facebook users are evenly distributed through
the network, the condition in Eq. \ref{eq:ap2-equal-tie-fbnofb} may still be a reasonable basis for approximating visibility.

\hypertarget{non-uniform-mixing}{%
\subsubsection*{Non-uniform mixing}\label{non-uniform-mixing}}
\addcontentsline{toc}{subsubsection}{Non-uniform mixing}

In this section, we consider a second extension of the simple model: we generalize the model
to a situation in which people who use Facebook may interact differently with people who do
and who do not use Facebook.
Thus, we no longer assume that mixing is homogenous.
We will see that in this case, the condition \(\bar{d}_{H,F} = \bar{d}_{F,F}\) will no longer
hold in general.

In order to incorporate non-homogenous mixing between internet users who are and are not on
Facebook, we use a block model (see, e.g., Wasserman et al. 1994).
The block model says that the probability that \(i, j \in H\) are connected is a function only of
the `block' or group memberships of \(i\) and \(j\).
We consider a block model with two groups: Facebook users (\(F\)) and
internet users who do not use Facebook (\(H-F\)).
Every internet user is in one and only one of these two groups.
Figure \ref{fig:sb-matrix} shows the probability that \(i\) is connected to \(j\) given the group
memberships of \(i\) (rows) and \(j\) (columns) in terms of two parameters: \(\phi \in [0,1]\) and
\(\sigma \in [0,1]\).
The parameter \(\phi\) controls the probability of edges between two members of the same group,
and the parameter \(\sigma\) is a factor by which the probability of a connection is reduced
between two nodes who are in different groups.
When \(\sigma = 1\), this block model reduces to the simple homogenous mixing model we considered above.
On the other hand, when \(\sigma < 1\), there is non-homogenous mixing: someone on Facebook is
more likely to share a meal with someone else on Facebook than someone who is not on Facebook.

\begin{figure}
\begin{equation}
\bordermatrix{
                    & F               & H-F              \cr
    F            & \phi               & \sigma\cdot\phi  \cr
    H-F          & \sigma\cdot\phi    & \phi
}
\end{equation}
\caption{%
Matrix describing the block model: entry $(i,j)$ shows the probability that a randomly
chosen node from group $i$ is connected to a randomly chosen node from group $j$. 
The matrix is parameterized by $\phi \in [0,1]$, the probability of a connection between
two members of the same group; and $\sigma \in [0,1]$, the extent to which the probability of
connectivity is diminished when two nodes are in different groups.
When $\sigma = 1$, we have the homogenous mixing model considered above;
when $\sigma < 1$, there is non-homogenous mixing.
}
    \label{fig:sb-matrix}
\end{figure}

Now we will derive expressions for \(\bar{d}_{F,F}\) and \(\bar{d}_{H,F}\) under this
block model.
First, note that each node in \(F\) has a probability \(\phi\) of being connected to each of the
\(N_F-1\) other nodes in \(F\); thus, \( \mathbb{E} [\bar{d}_{F,F}] = (N_F-1)\phi\).
Also, note that if \(N_F\) is much bigger than 1, which it will typically be in the situations we are interested in,
then \( \mathbb{E} [\bar{d}_{F,F}] \approx N_F \phi\).

Next, we consider \(\bar{d}_{H,F}\).
By Fact \ref{res:partition-dbar}, this quantity
can be written as

\begin{equation}
\begin{aligned}
 \mathbb{E} [\bar{d}_{H,F}]
&=  \mathbb{E} [p_{F|H}~\bar{d}_{F,F} + (1-p_{F|H})\bar{d}_{H-F,F}] && \text{(by Fact \ref{res:partition-dbar})}\\
&= p_{F|H}~ \mathbb{E} [\bar{d}_{F,F}] + (1-p_{F|H}) \mathbb{E} [\bar{d}_{H-F,F}],
\end{aligned}
\label{eqn:intermed2-1}
\end{equation}

where we have written the proportion of internet users who also use Facebook as
\(p_{F|H} = \frac{N_F}{N_H}\).

Now, each node in \(H-F\) will have a probability \(\phi \sigma\) of being connected to
each of the \(N_F\) nodes in \(F\); thus, \( \mathbb{E} [\bar{d}_{H-F,F}] = N_F \phi \sigma\).
Substituting this relationship into Eq. \ref{eqn:intermed2-1}, along with the approximation
\( \mathbb{E} [\bar{d}_{F,F}] \approx N_F \phi\) discussed above, we obtain

\begin{equation}
\begin{aligned}
 \mathbb{E} [\bar{d}_{H,F}]
&= p_{F|H}~ \mathbb{E} [\bar{d}_{F,F}] + (1-p_{F|H}) \mathbb{E} [\bar{d}_{H-F,F}]\\
&\approx p_{F|H}~N_F \phi + (1-p_{F|H}) N_F \phi \sigma.
\end{aligned}
\label{eqn:intermed2-2}
\end{equation}

Comparing Eq. \ref{eqn:intermed2-2} to \(\bar{d}_{F,F} \approx N_F \phi\), we can see that the
two will not, in general be the same; indeed, as long as \(p_{F|H} < 1\) or \(\sigma < 1\), they
will be different.

To better understand this result, we can incorporate this analysis into our sensitivity framework.
Under perfect reporting, the parameter
\(\eta = \frac{\bar{v}_{H,F}}{\bar{v}_{F,F}}\)
can be written as
\(\eta = \frac{\bar{d}_{H,F}}{\bar{d}_{F,F}}\).
Now we can substitute the expressions we just derived to see what the value of \(\eta\) would be
under this block model:

\begin{equation}
\begin{aligned}
\eta 
&= \frac{\bar{v}_{H,F}}{\bar{v}_{F,F}} && \\
&= \frac{\bar{d}_{H,F}}{\bar{d}_{F,F}} && \text{(under perfect reporting)}\\
&= \frac{ p_{F|H}~(N_F-1) \phi + (1-p_{F|H}) N_F \phi \sigma }{ (N_F-1) \phi}\\
&\approx \frac{ p_{F|H}~N_F \phi + (1-p_{F|H}) N_F \phi \sigma }{ N_F \phi} && \text{(since $N_F-1\approx N_F$)}\\
&= p_{F|H} + (1-p_{F|H}) \sigma.
\end{aligned}
\label{eqn:eta-block}
\end{equation}

Eq. \ref{eqn:eta-block} shows that, under this block model, non-homogenous mixing will change
the parameter \(\eta\) so that it is different from one.
\(\eta\) will be farther from 1 when \(\sigma\) is farther from 1--that is, when there is more
non-homogenous mixing--and when the share of internet users that is on Facebook, \(p_{F|H}\),
is smaller.

\hypertarget{summary}{%
\subsection*{Summary}\label{summary}}
\addcontentsline{toc}{subsection}{Summary}

To recap, in this appendix, we introduced three simple models to help understand the condition
\(\bar{d}_{F,F} = \bar{d}_{H,F}\),
which is the basis of our approach to approximating the visibility of internet users in the
main text.
First, we saw that under a model in which people on the internet mix homogenously, paying no attention to
whether or not they are on Facebook, this relationship would be expected to hold.
Next, we saw that if the population is a mixture of different populations -- say, urban and rural people --
then the condition can also hold, as long as the different populations don't mix, and
mixing is homogenous within each population.
Finally, we introduced a third model in which there is non-homogenous mixing.
We saw that this third model could lead to values of the \(\eta\) parameter that are different from 1,
meaning that the condition is violated.

\hypertarget{sec:additional-results}{%
\section{Additional results}\label{sec:additional-results}}

Table \ref{tbl:raw_degree} reports estimated average network size (degree) for each tie definition.

\begin{table}[ht]
\centering
\begin{tabular}{lll}
  \hline
country & Conversational contact & Meal \\ 
  \hline
Brazil & 13.1 (12.5, 13.6) & 6.3 (5.9, 6.6) \\ 
  Colombia & 10.5 (10, 11.1) & 7.2 (6.9, 7.6) \\ 
  Great Britain & 12.7 (11.6, 13.9) & 4.4 (3.7, 5.3) \\ 
  Indonesia & 11 (10.4, 11.6) & 7.5 (7, 8) \\ 
  United State & 12.1 (11.6, 12.5) & 5 (4.6, 5.4) \\ 
   \hline
\end{tabular}
\caption{Estimated average degree and 95\% confidence interval, by type of personal network} 
\label{tbl:raw_degree}
\end{table}

Fig. \ref{fig:alters} illustrates the detailed alters subsampled from each respondent's personal network.

\begin{figure}[t]
    \centering
        \subfloat[\label{fig:all-alters}]{\includegraphics[width=1.75in]{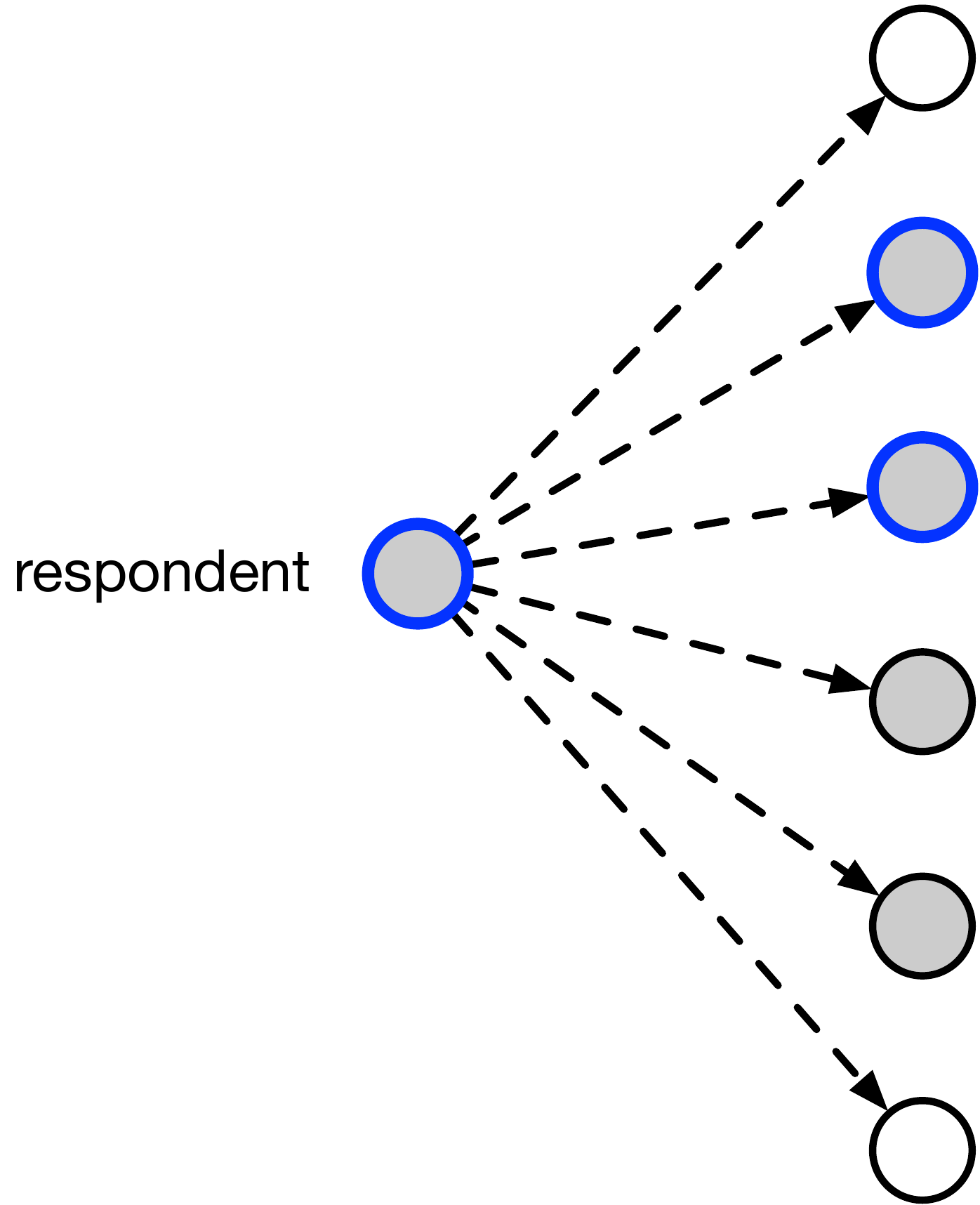}}
        \subfloat[\label{fig:detailed-alters}]{\includegraphics[width=1.75in]{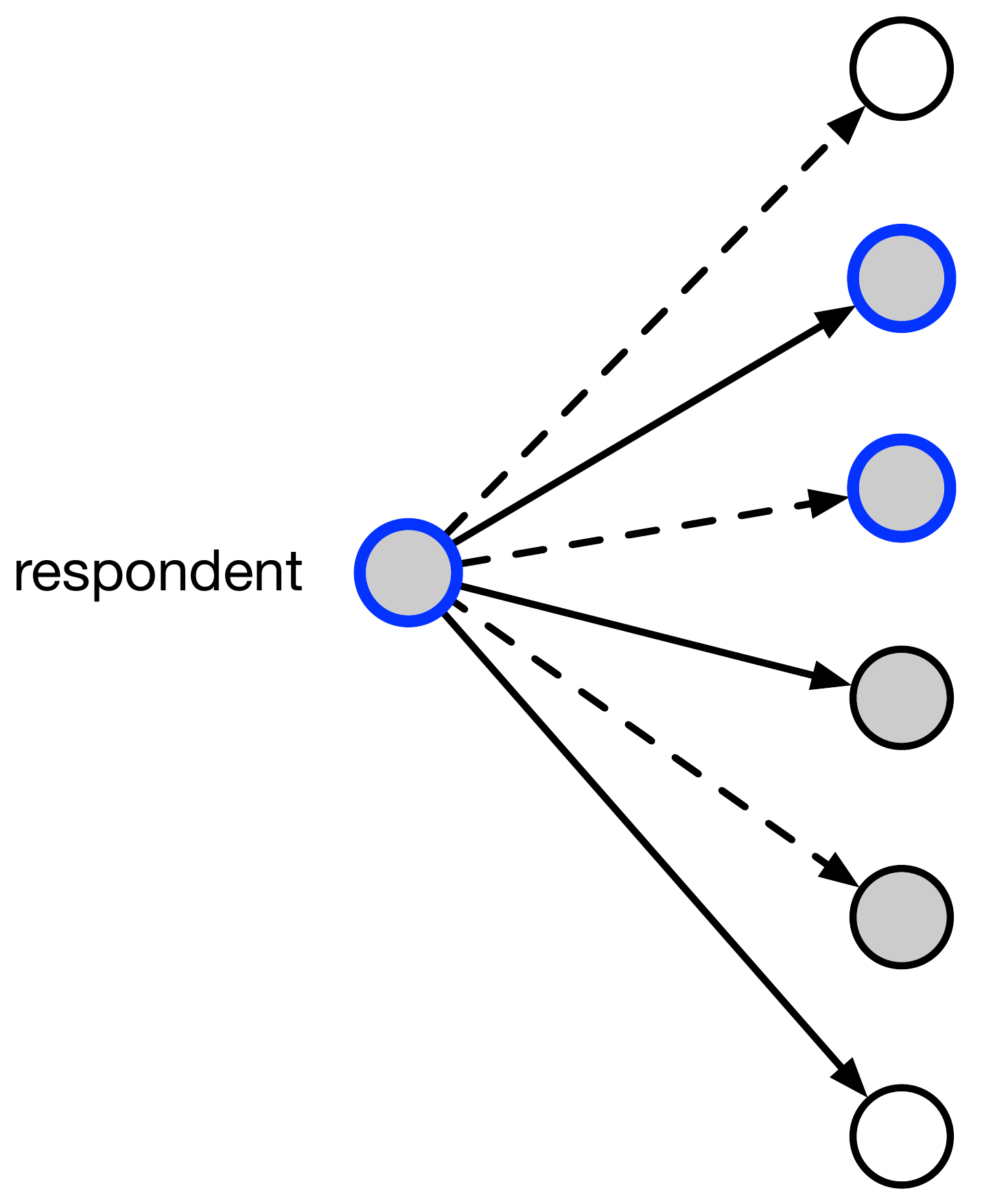}}
        \caption{(a) A survey respondent who is sampled online can be asked to report about members of one of her offline personal networks (e.g. her kin, friendship, or contact networks). Her responses contain information about both people who are online and people who are offline.
        (b) In order to reduce respondent burden, we asked for more detailed information about internet use, gender, and age for three \emph{detailed alters} in each respondent's personal network (solid lines).}
        \label{fig:alters}
\end{figure}

Table \ref{tbl:ic_err} provides a summary of the comparison of TAE between the two tie definitions
within each country (the information that is visualized in Fig.~\ref{fig:ic-diff}).

\begin{table}[ht]
\centering
\begin{tabular}{lll}
  \hline
Country & Median TAE & Mean TAE (95\% CI) \\ 
  \hline
Brazil & 8.62 & 8.64 (2.16, 15.54) \\ 
  Colombia & 4.77 & 4.82 (-1.36, 11.14) \\ 
  Great Britain & 5.43 & 5.67 (-2.58, 14.22) \\ 
  Indonesia & -6.71 & -6.85 (-24.13, 8.82) \\ 
  United States & 1.39 & 1.4 (-3.28, 6.52) \\ 
   \hline
\end{tabular}
\caption{Estimated mean and 95\% CI for the TAE, the difference in total absolute error in internal consistency checks across all age groups for the conversational contact network minus the same quantity for the meal network.  Positive values mean that the conversational contact network was less internally consistent than the meal network, as measured by absolute error. } 
\label{tbl:ic_err}
\end{table}

~\\
\hspace*{0.333em}\\
\hspace*{0.333em}

\clearpage

\end{document}